\newtheorem{lemma}{Lemma} 
\newtheorem{theorem}{Theorem} 
\newtheorem{corollary}{Corollary} 
\newtheorem{proposition}{Proposition}
\newtheorem{definition}{Definition} 
\theoremstyle{remark}
\newtheorem{example}{Example} 
\newtheorem{remark}{Remark}
\newcommand{\notmodels}{\ \makebox[0.1cm][l]{\ensuremath{\models}}/ \ }
\newcommand{\surhom}{
  \ensuremath{
      \negthinspace 
      \longrightarrow
      \hspace{-5mm} \rightarrow \hspace{1mm}
  }
}
\newenvironment{romannum}[0]
{\begin{enumerate}[(i)]}{\end{enumerate}}
\renewcommand{\phi}{\varphi}
\newcommand{\ignore}[1]{}
\newcommand{\bA}{\mathcal{A}}
\newcommand{\bB}{\mathcal{B}}
\newcommand{\bC}{\mathcal{C}}
\newcommand{\bH}{\mathcal{H}}
\newcommand{\bK}{\mathcal{K}}
\newcommand{\bP}{\mathcal{P}}
\newcommand{\bT}{\mathcal{T}}
\newcommand{\csplogic}{\ensuremath{\{\exists, \wedge \}
    \mbox{-}\mathrm{FO}}}
\newcommand{\qcsplogic}{\ensuremath{\{\exists, \forall, \wedge \}
    \mbox{-}\mathrm{FO}}}
\newcommand{\qcsplogiceq}{\ensuremath{\{\exists, \forall, \wedge,= \} \mbox{-}\mathrm{FO}}}
\newcommand{\mylogic}{\ensuremath{\{\exists, \forall, \wedge,\vee \} \mbox{-}\mathrm{FO}}}
\newcommand{\posFO}{\ensuremath{\{\exists, \forall, \wedge,\vee,= \}
    \mbox{-}\mathrm{FO}}}
\newcommand{\tuple}[1]{\ensuremath{\mathbf{#1}}}
\begin{document}

\title{Containment, Equivalence and Coreness from CSP to QCSP and beyond.}

\author{
Barnaby Martin \\ 
Engineering and Computing Sciences, Durham University, U.K.\thanks{Supported by EPSRC grant EP/G020604/1.} \\
\texttt{barnabymartin@gmail.com}
\and
Florent Madelaine \\ 
Clermont Universit\'{e}, Universit\'{e} d'Auvergne, \\
Clermont-Ferrand, France. \\
\texttt{florent.madelaine@googlemail.com}
}

\maketitle

\begin{abstract}
The constraint satisfaction problem (CSP) and its quantified extensions, 
whether without (QCSP) or with disjunction (QCSP$_\lor$), correspond naturally to the
model checking problem for three increasingly stronger fragments of
positive first-order logic. Their complexity is often studied
when parameterised by a fixed model, the so-called template.
It is a natural question to ask when two templates are equivalent, or
more generally when one ``contain'' another, in the sense that a
satisfied instance of the first will be necessarily satisfied in
the second. One can also ask for a smallest possible equivalent
template: this is known as the core for CSP.
We recall and extend previous results on containment,
equivalence and ``coreness'' for QCSP$_\lor$ before
initiating a preliminary study of cores for QCSP which we characterise for
certain structures and which turns out to be more elusive. 
\end{abstract}

\section{Introduction}

We consider the following increasingly stronger fragments of first-order
logic:
\begin{enumerate}
\item primitive positive first-order ($\csplogic$)
\item positive Horn ($\qcsplogic$)
\item positive equality-free first-order (\mylogic); and,
\item positive first-order logic (\posFO)
\end{enumerate}
The \emph{model checking problem} for a logic $\mathscr{L}$ takes as
input a sentence of $\mathscr{L}$ and a structure $\mathcal{B}$ and
asks whether $\mathcal{B}$ models $\mathcal{L}$. The structure
$\mathcal{B}$ is often assumed to be a fixed parameter and called the
template; and, unless otherwise stated, we will assume
implicitly that we work in this so-called \emph{non-uniform} setting.

For the above first three fragments, the model checking problem is
better known as the \emph{constraint satisfaction problem}
CSP$(\mathcal{B})$, the \emph{quantified constraint satisfaction
  problem} QCSP$(\mathcal{B})$ and its extension with disjunction
which we shall denote by QCSP$_\lor(\mathcal{B})$.
Much of the theoretical research into CSPs is in respect of a large
complexity classification project -- it is conjectured that
CSP$(\mathcal{B})$ is always either in P or NP-complete
\cite{FederVardi}. This \emph{dichotomy} conjecture remains unsettled,
although dichotomy is now known on substantial classes
(e.g. structures of size $\leq 3$ \cite{Schaefer,Bulatov} and smooth
digraphs \cite{HellNesetril,barto:1782}). Various methods,
combinatorial (graph-theoretic), logical and universal-algebraic have
been brought to bear on this classification project, with many
remarkable consequences. A conjectured delineation for the dichotomy
was given in the algebraic language in~\cite{JBK}.

Complexity classifications for QCSPs appear to be harder than for
CSPs. Just as CSP$(\mathcal{B})$ is always in NP, so
QCSP$(\mathcal{B})$ is always in Pspace. No overarching polychotomy
has been conjectured for the complexities of QCSP$(\mathcal{B})$, as
$\mathcal{B}$ ranges over finite structures, but the only known
complexities are P, NP-complete and Pspace-complete (see
\cite{BBCJK,CiE2006} for some trichotomies). It seems plausible that
these complexities are the only ones that can be so obtained. 

Distinct templates may give rise to the same model-checking-problem or preserve acceptance,
\begin{itemize}
\item[] ($\mathscr{L}$-\textbf{equivalence}) for any sentence $\varphi$ of
  $\mathscr{L}$, $\mathcal{A}$ models $\phi$ $\Leftrightarrow$ $\mathcal{B}$
models $\phi$
\item[] ($\mathscr{L}$-\textbf{containment}) for any sentence $\varphi$ of
  $\mathscr{L}$, $\mathcal{A}$ models $\phi$ $\Rightarrow$ $\mathcal{B}$
models $\phi$.
\end{itemize}
We will see that containment and therefore equivalence is decidable,
and often quite effectively so, for the four logics we have introduced. 

For example, when $\mathscr{L}$ is $\csplogic$, any two bipartite
undirected graphs that have at least one edge are equivalent.
Moreover, there is a canonical \emph{minimal} representative for each equivalence
class, the so-called \emph{core}. For example, the core of the class of bipartite
undirected graphs that have at least one edge is the graph $\bK_2$ that
consists of a single edge.
The core enjoys many benign properties
and has greatly facilitated
the classification project for CSPs (which corresponds to the
model-checking for $\csplogic$): it is
unique up to isomorphism 
and sits as an induced substructure in all templates in its
equivalence class. A core may be defined as a structure all of whose
endomorphisms are automorphisms. To review, therefore, it is
well-known that two templates $\mathcal{A}$ and $\mathcal{B}$ are equivalent 
iff there are homomorphisms from $\bA$ to $\bB$ and from $\bB$ to
$\bA$, and in this case there is an (up to isomorphism) unique core
$\bC$ equivalent to both $\mathcal{A}$ and $\mathcal{B}$ such that
$\bC \subseteq \bA$ and $\bC \subseteq \bB$. 

The situation for $\qcsplogic$ and QCSP is somewhat murkier. It is
known that non-trivial $\mathcal{A}$ and $\mathcal{B}$ are equivalent iff there exist integers
$r$ and $r'$ and surjective homomorphisms from $\bA^r$ to $\bB$ and
from $\bB^{r'}$ to $\bA$ (and one may give a bound on these
exponents)~\cite{LICS2008}. However, the status and properties of
``core-ness'' for QCSP were hitherto unstudied. 

We \textbf{might} call a structure $\bB$ a \emph{Q-core} if there is
no equivalent $\bA$ of strictly smaller cardinality. We will discover
that \textbf{this} Q-core is a more cumbersome beast than its cousin
the core; it need not be unique nor sit as an induced substructure of
the templates in its class. However, in many cases we shall see that
its behaviour is reasonable and that -- like the core -- it can be
very useful in delineating complexity classifications. 

The erratic behaviour of Q-cores sits in contrast not just to that of
cores, but also that of the \emph{$U$-$X$-cores} of~\cite{LICS2011},
which are the canonical representatives of the equivalence classes
associated with $\mylogic$,  and were instrumental in deriving a full complexity
classification -- a tetrachotomy -- for QCSP$_\lor$ in \cite{LICS2011}. 
Like cores, they are unique 
and sit as induced substructures in all templates in their
class. Thus, primitive positive logic and positive equality-free logic
behave genially in comparison to their wilder cousin positive Horn. In
fact this manifests on the algebraic side also -- polymorphisms and
surjective hyper-endomorphisms 
are stable under composition, while surjective polymorphisms are not. 

Continuing to add to our logics, in restoring equality, we might
arrive at positive logic. Two finite structures agree on all sentences
of positive logic iff they are isomorphic -- so here every finite
structure satisfies the ideal of ``core''. 
When computing a/the smallest substructure with the same behaviour with
respect to the four decreasingly weaker logics -- positive logic,
positive equality-free, positive Horn, and primitive positive --
we will obtain possibly decreasingly smaller structures. In the case
of positive equality-free and primitive positive logic, as pointed
out, these are unique up to isomorphism; and for the $U$-$X$-core
and the core, these will be induced substructures. \textbf{A} Q-core will
necessarily contain the core and be included in the U-X-core.  
This phenomenon is illustrated on Table~\ref{tab:different-cores} and
will serve as our running example.

\begin{table}[h]
  \centering
  \begin{tabular}[m]{|c|c|c|c|}
    \hline
    \posFO& \mylogic& \qcsplogic& \csplogic\\
    \hline
    $\mathcal{A}_4$&$\mathcal{A}_3$&$\mathcal{A}_2$&$\mathcal{A}_1$\\
    \begin{minipage}[c]{.2\textwidth}
      \centering
      \input{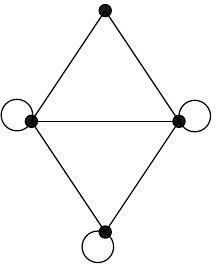_t}
    \end{minipage}
    & 
    \begin{minipage}[c]{.2\textwidth}
      \centering
      \input{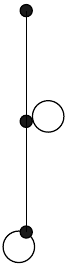_t}
    \end{minipage}
    &
    \begin{minipage}[c]{.2\textwidth}
      \centering
      \input{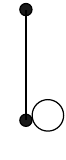_t}
    \end{minipage}
    & 
    \begin{minipage}[c]{.2\textwidth}
            \centering
      \begin{picture}(0,0)%
\includegraphics{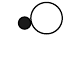}%
\end{picture}%
%
%
\setlength{\unitlength}{3108sp}%
\begingroup\makeatletter\ifx\SetFigFont\undefined%
\gdef\SetFigFont#1#2#3#4#5{%
  \reset@font\fontsize{#1}{#2pt}%
  \fontfamily{#3}\fontseries{#4}\fontshape{#5}%
  \selectfont}%
\fi\endgroup%
\begin{picture}(388,373)(4351,-5926)
\put(4366,-5911){\makebox(0,0)[lb]{\smash{{\SetFigFont{9}{10.8}{\rmdefault}{\mddefault}{\updefault}{\color[rgb]{0,0,0}1}%
}}}}
\end{picture}%

    \end{minipage}
    \\
    &&&\\[-8pt]
    \hline
    isomorphism & $U$-$X$-Core & Q-core & Core\\
    \hline
  \end{tabular}
  \caption{Different notions of "core" (the
circles represent self-loops
).}
  \label{tab:different-cores}
\end{table}

The paper is organised as follows.
In Section~\ref{sec:csp}, we recall folklore results on CSP. In
Section~\ref{sec:mylogic}, we recall results on coreness and spell out containment
for $\mylogic$ that were only implicit in~\cite{LICS2011}. In
Section~\ref{sec:qcsp}, we move on to QCSP and recall results on
the decidability of containment from~\cite{LICS2008} together with new
lower bounds before initiating a study of the notion of core for QCSP.

\section{The case of CSP}
\label{sec:csp}
Unless otherwise stated, we consider structures over a fixed relational
signature $\sigma$. We denote by $A$ the domain of a structure
$\mathcal{A}$ and for every relation symbol $R$ in $\sigma$ of arity
$r$, we write $R^\mathcal{A}$ for the interpretation of $R$ in
$\mathcal{A}$, which is a $r$-ary relation that is
$R^\mathcal{A}\subseteq A^r$. 
We write $|A|$ to denote the cardinality
of the set $A$.
A homomorphism (resp., strong homomorphism) from a structure $\mathcal{A}$ to a structure $\mathcal{B}$ is a function
$h:A\rightarrow B$ such that $(h(a_1),\ldots,h(a_r)) \in R^\mathcal{B}$, if (resp., iff) $(a_1,\ldots,a_r) \in R^\mathcal{A}$.

We will occasionally consider signatures with constant symbols. We
write $c^\mathcal{A}$ for the interpretation of a constant symbol $c$
and homomorphisms are required to preserve constants as well, that is $h(c^\mathcal{A})=c^\mathcal{B}$.

Containment for $\csplogic$ is a special case of conjunctive query
containment from databases~\cite{ChandraMerlin}.
We state and prove these results for pedagogical reasons, before moving to the case of $\mylogic$.
Let us fix some notation first. Given a sentence $\varphi$ in
$\csplogic$, we denote by $\mathcal{D}_\varphi$ its \emph{canonical
  database}, that is the structure with domain the variables of 
$\varphi$ and whose tuples are precisely those that are atoms of
$\varphi$. In the other direction, given a finite structure
$\mathcal{A}$, we write $\phi_{\!\mathcal{A}}$ for the so-called \emph{canonical conjunctive query} of
$\mathcal{A}$, the quantifier-free formula that is the conjunction
of the positive facts of $\mathcal{A}$, where the variables
$v_1,\ldots,v_{|A|}$ correspond to the elements 
$a_1,\ldots,a_{|A|}$ of $\mathcal{A}$.\footnote{Most authors consider the canonical query to be the sentence which is the existential quantification of $\phi_{\!\mathcal{A}}$.} 
It is well known that there is a homomorphism from $\mathcal{D}_\varphi$ to a
structure $\mathcal{A}$ if, and only if,
$\mathcal{A}\models\varphi$. Moreover, a winning strategy for
$\exists$ in the (Hintikka) $(\mathcal{A},\varphi)$-game 
is precisely a homomorphism from $\mathcal{D}_\varphi$ to
$\mathcal{A}$. 
Note also that $\mathcal{A}$ is isomorphic to the canonical 
database of $\exists v_1 \exists v_2 \ldots v_{|A|}
\phi_{\!\mathcal{A}}$.

\begin{theorem}[\textbf{Containment}]
  Let $\mathcal{A}$ and $\mathcal{B}$ be two structures.
  The following are equivalent.
  \begin{romannum}
  \item for every sentence $\varphi$ in $\csplogic$, if $\mathcal{A}\models \varphi$ then
    $\mathcal{B}\models \varphi$.
  \item The exists a homomorphism from $\mathcal{A}$ to $\mathcal{B}$.
  \item $\mathcal{B}\models \exists v_1 \exists v_2 \ldots v_{|A|} \phi_{\!\mathcal{A}}$.
  \end{romannum}
  where $\phi_{\!\mathcal{A}}$ denotes the \emph{canonical conjunctive
    query} of $\mathcal{A}$. 
\end{theorem}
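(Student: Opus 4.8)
The plan is to establish the cycle of implications $(i)\Rightarrow(iii)\Rightarrow(ii)\Rightarrow(i)$, leaning throughout on the two facts recalled just before the statement: that $\mathcal{A}\models\varphi$ holds precisely when there is a homomorphism from the canonical database $\mathcal{D}_\varphi$ to $\mathcal{A}$, and that $\mathcal{A}$ is isomorphic to $\mathcal{D}_\psi$ for the distinguished sentence $\psi:=\exists v_1\exists v_2\cdots v_{|A|}\,\phi_{\!\mathcal{A}}$. The whole argument then reduces to transporting homomorphisms back and forth along these correspondences and closing them under composition.

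For $(i)\Rightarrow(iii)$, I would first observe that $\psi$ is itself a sentence of $\csplogic$, so it suffices to check that $\mathcal{A}\models\psi$ and then instantiate $(i)$ with $\varphi=\psi$ to obtain $\mathcal{B}\models\psi$, which is exactly $(iii)$. That $\mathcal{A}\models\psi$ is immediate from the noted isomorphism $\mathcal{D}_\psi\cong\mathcal{A}$: the isomorphism is in particular a homomorphism from $\mathcal{D}_\psi$ to $\mathcal{A}$, so by the first recalled fact $\mathcal{A}\models\psi$. Equivalently, $\exists$ wins the $(\mathcal{A},\psi)$-game by playing each $v_i$ as the corresponding element $a_i$.

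For $(iii)\Rightarrow(ii)$, I would apply the first recalled fact in the other direction: $\mathcal{B}\models\psi$ yields a homomorphism $g$ from $\mathcal{D}_\psi$ to $\mathcal{B}$, and precomposing $g$ with the isomorphism from $\mathcal{A}$ to $\mathcal{D}_\psi$ produces a homomorphism from $\mathcal{A}$ to $\mathcal{B}$, establishing $(ii)$. Finally, for $(ii)\Rightarrow(i)$, I would use the closure of homomorphisms under composition. Given a homomorphism $h$ from $\mathcal{A}$ to $\mathcal{B}$ and any sentence $\varphi$ of $\csplogic$ with $\mathcal{A}\models\varphi$, the first recalled fact supplies a homomorphism from $\mathcal{D}_\varphi$ to $\mathcal{A}$; composing with $h$ gives a homomorphism from $\mathcal{D}_\varphi$ to $\mathcal{B}$, whence $\mathcal{B}\models\varphi$ by the same fact. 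As $\varphi$ was arbitrary, $(i)$ follows, closing the cycle.

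No step here is genuinely hard; the only point demanding care is the bookkeeping in $(i)\Rightarrow(iii)$, where one must select the correct witnessing sentence $\psi$ and verify that $\mathcal{A}$ satisfies its own canonical query — a routine consequence of $\mathcal{D}_\psi\cong\mathcal{A}$ rather than a substantive obstacle. It is worth flagging that the whole proof relies on homomorphisms being composable, which is exactly the stability property that the introduction contrasts with the more delicate situation for positive Horn.
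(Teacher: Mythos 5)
Your proof is correct and follows essentially the same route as the paper's: both rest on the correspondence between satisfaction of a $\csplogic$ sentence and homomorphisms from its canonical database, the observation that $\mathcal{A}\models\exists v_1\ldots v_{|A|}\,\phi_{\!\mathcal{A}}$, and closure of homomorphisms under composition for (ii)$\Rightarrow$(i). Your only cosmetic difference is organising the argument as a single cycle (i)$\Rightarrow$(iii)$\Rightarrow$(ii)$\Rightarrow$(i) where the paper proves (ii)$\Leftrightarrow$(iii) directly and adds the two remaining implications; incidentally, your composition $h\circ g$ is written in the correct order, whereas the paper's ``$g\circ h$'' is a slip of notation.
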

\begin{proof}
  A homomorphism corresponds precisely to a winning strategy in the 
  $(\mathcal{A},\phi)$-game and (ii) and (iii) are
  equivalent. 
  Clearly, (i) implies (iii) since $\mathcal{A}\models  \exists v_1 \exists
  v_2 \ldots v_{|A|} \phi_{\!\mathcal{A}}$.
  
  We now prove that (ii) implies (i). Let $h$ be a homomorphism from
  $\mathcal{A}$ to $\mathcal{B}$. If $\mathcal{A}\models \varphi$,
  then there is a homomorphism $g$ from $\mathcal{D}_\varphi$ to
  $\mathcal{A}$. By composition, $g\circ h$ is a homomorphism from
  $\mathcal{D}_\varphi$ to $\mathcal{B}$. In other words, $g\circ h$
  is a winning strategy witnessing that $\mathcal{B}\models \varphi$.
\end{proof}
It is well known that the core is unique up to isomorphism and that it is
an induced substructure~\cite{HNBook}. It is usually defined via
homomorphic equivalence, but because of the equivalence between (i)
and (ii) in the above theorem, we may define the core as follows.
\begin{definition}
  \textbf{The} \emph{core} $\mathcal{B}$ of a structure $\mathcal{A}$ is a minimal
  substructure of $\mathcal{A}$ such that for every sentence $\varphi$
  in $\csplogic$, $\mathcal{A}\models \varphi$ if and only if
  $\mathcal{B}\models \varphi$. 
\end{definition}
\begin{corollary}[\textbf{equivalence}]
  Let $\mathcal{A}$ and $\mathcal{B}$ be two structures.
  The following are equivalent.
  \begin{romannum}
  \item for every sentence $\varphi$ in $\csplogic$,
    $\mathcal{A}\models \varphi$ if and only if
    $\mathcal{B}\models \varphi$.
  \item There are homomorphisms from $\mathcal{A}$ to $\mathcal{B}$
    and from $\mathcal{B}$ to $\mathcal{A}$.
  \item The core of $\mathcal{A}$ and the core of $\mathcal{B}$ are isomorphic.
  \end{romannum}
\end{corollary}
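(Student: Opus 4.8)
The plan is to reduce the Corollary to the Containment Theorem already proved, since equivalence is just containment in both directions. First I would establish the equivalence of (i) and (ii). For (i) $\Rightarrow$ (ii): if $\mathcal{A}$ and $\mathcal{B}$ agree on all sentences of $\csplogic$, then in particular they satisfy containment in both directions, so applying the Containment Theorem (the equivalence of its parts (i) and (ii)) once with the roles of $\mathcal{A},\mathcal{B}$ as given and once with the roles swapped yields a homomorphism $\mathcal{A}\to\mathcal{B}$ and a homomorphism $\mathcal{B}\to\mathcal{A}$. Conversely, for (ii) $\Rightarrow$ (i): given homomorphisms in both directions, the Containment Theorem gives containment in both directions, which is exactly the biconditional in (i). This part is essentially bookkeeping and should be routine.

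The substantive work is in connecting (iii) to the other two conditions. I would use the stated facts that the core is unique up to isomorphism and is an induced substructure, together with the Definition: the core $\mathcal{C}_{\mathcal{A}}$ of $\mathcal{A}$ is a minimal substructure satisfying exactly the same $\csplogic$-sentences as $\mathcal{A}$, and likewise $\mathcal{C}_{\mathcal{B}}$ for $\mathcal{B}$. For (i) $\Rightarrow$ (iii): since $\mathcal{C}_{\mathcal{A}}$ is $\csplogic$-equivalent to $\mathcal{A}$, and $\mathcal{C}_{\mathcal{B}}$ to $\mathcal{B}$, and $\mathcal{A}$ is $\csplogic$-equivalent to $\mathcal{B}$ by hypothesis, transitivity gives that $\mathcal{C}_{\mathcal{A}}$ and $\mathcal{C}_{\mathcal{B}}$ are $\csplogic$-equivalent. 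By the already-proved equivalence of (i) and (ii) in this Corollary, there are homomorphisms between $\mathcal{C}_{\mathcal{A}}$ and $\mathcal{C}_{\mathcal{B}}$ in both directions; the uniqueness of the core then forces $\mathcal{C}_{\mathcal{A}}\cong\mathcal{C}_{\mathcal{B}}$. For (iii) $\Rightarrow$ (i): if the cores are isomorphic they are $\csplogic$-equivalent, and each is equivalent to its parent structure, so $\mathcal{A}$ and $\mathcal{B}$ are equivalent by transitivity.

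The main obstacle, and the place where I would be careful, is justifying that homomorphic equivalence of two cores implies they are isomorphic. This is the crux: minimality rules out any proper homomorphic retraction into a smaller substructure, so any homomorphism from a core to a homomorphically equivalent core must be a bijection, and in fact a strong homomorphism, hence an isomorphism. I would invoke the cited uniqueness result \cite{HNBook} rather than reprove it, since the excerpt explicitly grants that the core is unique up to isomorphism; the argument then closes cleanly by transitivity of $\csplogic$-equivalence throughout. The only other point requiring attention is that all three conditions are phrased symmetrically in $\mathcal{A}$ and $\mathcal{B}$, so no asymmetry needs to be tracked, and the cyclic chain (i) $\Rightarrow$ (ii) $\Rightarrow$ (i) and (i) $\Leftrightarrow$ (iii) suffices to establish the full equivalence.
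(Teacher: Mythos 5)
Your proposal is correct and takes exactly the route the paper intends: the paper states this corollary without any proof, treating it as an immediate consequence of the Containment Theorem applied in both directions together with the uniqueness of the core up to isomorphism cited from~\cite{HNBook}, which is precisely the chain you assemble. The one delicate step you flag---that homomorphic equivalence of two cores forces isomorphism---is indeed best delegated to the cited uniqueness result, as you do, rather than re-derived from the paper's ``minimal substructure'' definition.
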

As a preprocessing step, one could
replace the template $\mathcal{A}$ of a CSP by its core
$\mathcal{B}$ (see Algorithm 6.1 in~\cite{CohenJeavonsSurvey}). 
However,
the complexity of this preprocessing step would be of the same order of
magnitude as solving a constraint satisfaction problem.\footnotemark{}
This drawback, together with the uniform nature of the instance in
constraints solvers, means that this preprocessing is not exploited
in practice to the best of our knowledge.
\footnotetext{Checking that a graph is a core is
  coNP-complete~\cite{cores}. Checking that a graph is the core of
  another given graph is
  DP-complete~\cite{DBLP:journals/tods/FaginKP05}.}

The notion of a core can be extended and adapted suitably to solve
important questions related to data exchange and query rewriting in
databases~\cite{DBLP:journals/tods/FaginKP05}. 
It is also very useful as a simplifying assumption when classifying
the complexity: with the algebraic approach, it allows to study only
idempotent algebras~\cite{JBK}. 

\section{The case of QCSP with disjunction}
\label{sec:mylogic}

For $\mylogic$, it is no longer the homomorphism that is the correct concept to transfer winning
strategies. 
\begin{definition}
  A \emph{surjective hypermorphism} $f$ from a structure
  $\mathcal{A}$ to a structure $\mathcal{B}$ is a
  function from the domain $A$ of $\mathcal{A}$ to the power set
  of the domain $B$ of $\mathcal{B}$ that satisfies the following properties.
  \begin{itemize}
  \item (\textbf{total}) for any $a$ in $A$, $f(a)\neq \emptyset$.
  \item (\textbf{surjective}) for any $b$ in $B$, there exists $a$ in
    $A$ such that $f(a)\ni b$.
  \item (\textbf{preserving}) if $R(a_1,\ldots,a_i)$ holds in
    $\mathcal{A}$ then $R(b_1,\ldots,b_i)$ holds in $\mathcal{B}$ , for all $b_1 \in f(a_1),\ldots,$ $b_i \in f(a_i)$. 
  \end{itemize}
\end{definition}

A \emph{strategy} for $\exists$ in the (Hintikka) $(\mathcal{A},\phi)$-game, where $\varphi \in \mylogic$, is 
a set of mappings $\{\sigma_x : \mbox{`$\exists x$'} \in \phi \}$ with one mapping $\sigma_x$ for each existentially
quantified variable $x$ of $\varphi$.
The mapping $\sigma_x$ ranges over the domain $A$ of $\mathcal{A}$; and, its domain is the
set of functions from $Y_x$ to $A$, where $Y_x$ denotes the
universally quantified variables of $\phi$ preceding $x$.

We say that $\{\sigma_x : \mbox{`$\exists x$'} \in \phi \}$ is \emph{winning} if for any assignment
$\pi$ of the universally quantified variables of $\varphi$ to $A$, when each 
existentially quantified variable $x$ is set according to $\sigma_x$
applied to $\left.\pi\right|_{Y_x}$, then the quantifier-free part $\psi$ of
$\varphi$ is satisfied under this overall assignment $h$.
When $\psi$ is disjunction-free, this amounts to $h$ being a
homomorphism from $\mathcal{D}_\psi$ to $\mathcal{A}$.

\begin{lemma}[\textbf{strategy transfer}]\label{lemma:strategy:transfert:mylogic}
  Let $\mathcal{A}$ and $\mathcal{B}$ be two structures such that
  there is a surjective hypermorphism from $\mathcal{A}$ to
  $\mathcal{B}$. Then, for every sentence $\varphi$ in $\mylogic$, if
  $\mathcal{A}\models \varphi$ then $\mathcal{B}\models \varphi$.
\end{lemma}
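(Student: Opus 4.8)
The plan is to transfer a winning strategy for $\exists$ in the $(\mathcal{A},\varphi)$-game into a winning strategy in the $(\mathcal{B},\varphi)$-game, using the surjective hypermorphism $f$ as a dictionary between the two boards. Since $\mathcal{A}\models\varphi$ holds iff $\exists$ has such a winning strategy, it suffices to produce one on the $\mathcal{B}$-side. I first extend the notion of strategy recalled above to the full logic $\mylogic$: besides the witness maps $\sigma_x$ for the existential variables, a strategy must also fix, at each disjunction node, which disjunct $\exists$ commits to (as a function of the play so far).

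I would then run the two games in lockstep, maintaining the invariant that whenever a variable $z$ has been assigned $a_z\in A$ in the simulated $\mathcal{A}$-game and $b_z\in B$ in the $\mathcal{B}$-game, one has $b_z\in f(a_z)$. The three defining properties of $f$ each play a distinct role in sustaining this correspondence. When $\forall$ moves in the $\mathcal{B}$-game by choosing some $b\in B$ for a universal variable, I use \emph{surjectivity} to select an $a\in A$ with $b\in f(a)$ and feed $a$ to the simulated $\mathcal{A}$-game as $\forall$'s move; this is the only place the full range of $\forall$'s moves must be absorbed, and it is exactly what surjectivity guarantees. When it is $\exists$'s turn at an existential variable, I consult the fixed winning $\mathcal{A}$-strategy to obtain its prescribed move $a\in A$, and play any $b\in f(a)$ in the $\mathcal{B}$-game, which is possible because \emph{totality} ensures $f(a)\neq\emptyset$. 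At each disjunction I simply copy the disjunct selected by the $\mathcal{A}$-strategy, so both plays descend into the same branch and reach the same atom.

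It remains to check that the resulting $\mathcal{B}$-strategy is winning. At the end of any play both games arrive at the same atom $R(z_1,\ldots,z_i)$, and since the $\mathcal{A}$-strategy is winning the fact $R(a_{z_1},\ldots,a_{z_i})$ holds in $\mathcal{A}$. By the invariant $b_{z_j}\in f(a_{z_j})$ for every $j$, so the \emph{preserving} property of $f$ yields $R(b_{z_1},\ldots,b_{z_i})$ in $\mathcal{B}$; hence $\exists$ wins the $\mathcal{B}$-game. As $\forall$'s moves were handled for arbitrary choices in $B$, this strategy is uniformly winning, so $\mathcal{B}\models\varphi$.

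The only genuinely delicate point is the bookkeeping: one must verify that the simulated $\mathcal{A}$-play is always a legal play consistent with the chosen $\mathcal{A}$-strategy, and that the quantifier dependencies match up — in particular that $\sigma_x$, which depends only on the universal variables preceding $x$, is applied to values $a_y$ already fixed before $x$ is reached. A cleaner, equivalent route that sidesteps the explicit game is a structural induction on $\varphi$ proving the relativised statement: for assignments with $b_j\in f(a_j)$ on the free variables, $(\mathcal{A},\bar a)\models\varphi$ implies $(\mathcal{B},\bar b)\models\varphi$; there the existential step uses totality, the universal step uses surjectivity, the atomic step uses preservation, and $\wedge,\vee$ are immediate.
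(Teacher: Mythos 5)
Your proposal is correct and takes essentially the same approach as the paper: a lockstep strategy transfer in which \emph{surjectivity} pulls back $\forall$'s moves from $B$ to $A$, \emph{totality} pushes forward the $\mathcal{A}$-strategy's existential witnesses, and \emph{preservation} settles the atoms --- and the paper resolves exactly the bookkeeping point you flag by fixing a linear order on $A$ and deterministically choosing $\pi_A(y):=\min f^{-1}(\pi_B(y))$, so that $\sigma'_x$ depends only on the universal variables preceding $x$. The only cosmetic difference is that the paper verifies the winning condition by putting the quantifier-free part in disjunctive normal form and composing homomorphisms from canonical databases $\mathcal{D}_{\psi_i}$ with $f$, which is a packaged form of your per-atom invariant $b_z\in f(a_z)$.
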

\begin{proof}
  Let $f$ be a surjective hypermorphism from $\mathcal{A}$ to
  $\mathcal{B}$ and $\varphi$ be a sentence of $\mylogic$ such that
  $\mathcal{A}\models \varphi$.   For any element $b$ of
  $\mathcal{B}$, let $f^{-1}(b):=\{a \in A \mbox{ s. t. } b\in f(a)\}$.
  We fix an arbitrary linear order over $A$ and write $\min f^{-1}(b)$
  to denote the smallest antecedent of $b$ in $A$ under $f$. 
  
  Let $\{\sigma_x : \mbox{`$\exists x$'} \in \phi \}$ be a winning strategy in the 
  $(\mathcal{A},\phi)$-game. We construct a strategy $\{\sigma'_x : \mbox{`$\exists x$'} \in \phi \}$
  in the 
  $(\mathcal{B},\phi)$-game as follows.
  Let $\pi_B:Y_x\to B$ be an assignment to the universal variables $Y_x$
  preceding an existential variable $x$ in $\varphi$, we select for 
  $\sigma'_x(\pi)$ an arbitrary element of $f(\sigma(\pi_A))$ where
  $\pi_A:Y_x\to A$ is an assignment such that for any universal variable
  $y$ preceding $x$, we have $\pi_A(y):=\min f^{-1}(\pi_B(y))$. This strategy is well
  defined since $f$ is surjective (which means that $\pi_A$ is well
  defined) and total (which means that $f(\sigma(\pi_A))\neq
  \emptyset$). Note moreover that using $\min$ in the definition of
  $\pi_A$ means that a branch in the tree of the game on $\mathcal{B}$ will
  correspond to a branch in  the tree of the game on $\mathcal{A}$.
  It remains to prove that $\{\sigma'_x : \mbox{`$\exists x$'} \in \phi \}$ is winning. We will see
  that it follows from the fact that $f$ is preserving.

  Assume first that $\varphi$ is a sentence of $\qcsplogic$.
  Let $\mathcal{D}_\psi$ be the canonical
  database of the quantifier-free part $\psi$ of $\varphi$. The winning
  condition of the 
  $(\mathcal{B},\phi)$-game can be recast as a homomorphism from
  $\mathcal{D}_\psi$. Composing with $f$ the homomorphism from $\mathcal{D}_\psi$ to
  $\mathcal{A}$ (induced by the sequence of compatible assignments
  $\pi_A$ to the universal variables and the strategy
  $\{\sigma_x : \mbox{`$\exists x$'} \in \phi \}$), we get a surjective hypermorphism from $\mathcal{D}_\psi$ to
  $\mathcal{B}$. The map from the domain of $\mathcal{D}_\psi$ to $\mathcal{B}$
  induced by the sequence of assignments $\pi_B$ and the strategy
  $\{\sigma'_x : \mbox{`$\exists x$'} \in \phi \}$ is a range restriction of this surjective
  hypermorphism and is therefore a homomorphism (we identify
  surjective hypermorphism to singletons with homomorphisms).

  When $\varphi$ is not a sentence of $\qcsplogic$, we write its
  quantifier-free part in  disjunctive normal
  form as a disjunction of conjunctions-of-atoms
  $\psi_i$. The winning condition can now be recast as a homomorphism
  from some $\mathcal{D}_{\psi_i}$. The above argument applies and the result follows.
\end{proof}
\begin{example}
  Consider the structures $\mathcal{A}_4$ and  $\mathcal{A}_3$ from
  Table~\ref{tab:different-cores}.
  The map  $f$ given by $f(1):=\{1\}, f(2):=\{2\}, f(3):=\{3\},
  f(4):=\{1\}$ is a surjective hypermorphism from  $\mathcal{A}_4$ to
  $\mathcal{A}_3$. The map $g$ given by $g(1):=\{1,4\}, g(2):=\{2\},
  g(3):=\{3\}$ is a surjective hypermorphism from  $\mathcal{A}_3$ to 
  $\mathcal{A}_4$. The two templates are equivalent w.r.t. $\mylogic$.
\end{example}
We extend the notion of canonical conjunctive query of a structure $\mathcal{A}$. Given a tuple
of (not necessarily distinct) elements $\tuple{r}:=(r_1,\ldots,r_l)
\in A^l$, define the quantifier-free formula
$\phi_{\!\mathcal{A}(\tuple{r})}(v_1,\ldots,v_l)$ to be the conjunction
of the positive facts of $\tuple{r}$, where the variables
$v_1,\ldots,v_l$ correspond to the elements $r_1,\ldots,r_l$. That is,
$R(v_{\lambda_1},\ldots,v_{\lambda_i})$ appears as an atom in
$\phi_{\!\mathcal{A}(\tuple{r})}$ iff
$R(r_{\lambda_1},\ldots,r_{\lambda_i})$ holds in $\mathcal{A}$. 
When $\tuple{r}$ enumerates the elements of the structure
$\mathcal{A}$, this definition coincides with the usual definition of canonical
conjunctive query.
Note also that there is a strong homomorphism from the canonical database
$\mathcal{D}_{\phi_\mathcal{A}(\tuple{r})}$ to $\mathcal{A}$ given by
the map $r_i \mapsto v_i$.

\begin{definition}[\textbf{Canonical $\mylogic$ sentence}]
  Let $\mathcal{A}$ be a structure and $m>0$.
  Let $\tuple{r}$ be an enumeration of the elements of $\mathcal{A}$.
  $$\theta_{\!\mathcal{A},m}:=
  \exists v_1, \ldots, v_{|A|}
  \phi_{\!\mathcal{A}(\tuple{r})}(v_1,\ldots,v_{|A|})
  \land
  \forall w_1,\ldots,w_m 
  \bigvee_{\tuple{t} \in A^{m}}
  \phi_{\!\mathcal{A}(\tuple{r},\tuple{t})}(\tuple{v},\tuple{w}).
  $$
\end{definition}
Observe that $\mathcal{A}\models \theta_{\!\mathcal{A},m}$. Indeed, we
may take as  witness for the variables $\tuple{v}$ the corresponding enumeration $\tuple{a}$
of the elements of
$\mathcal{A}$; and, for any assignment $\tuple{t} \in A^{m}$ to the
universal variables $\tuple{w}$, it is clear that
$\mathcal{A}\models\phi_{\!\mathcal{A}(\tuple{r},\tuple{t})}(\tuple{a},\tuple{t})$ holds.
\begin{lemma}[strategy transfer]\label{lemma:canonical:sentence:mylogic}
  Let $\mathcal{A}$ and $\mathcal{B}$ be two structures.
  If $\mathcal{B}\models \theta_{\!\mathcal{A},|\mathcal{B}|}$ then
  there is a surjective hypermorphism from $\mathcal{A}$ to $\mathcal{B}$. 
\end{lemma}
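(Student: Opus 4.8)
The plan is to unpack the $\exists\tuple{v}\,\forall\tuple{w}$ quantifier structure of $\theta_{\!\mathcal{A},|\mathcal{B}|}$ and read a surjective hypermorphism off the witnesses. First I would extract the witnesses for the outermost existential block: since $\mathcal{B}\models\theta_{\!\mathcal{A},|\mathcal{B}|}$, there are elements $b_1,\ldots,b_{|A|}\in B$ assigned to $v_1,\ldots,v_{|A|}$ making both conjuncts true. Writing $\tuple{r}=(r_1,\ldots,r_{|A|})$ for the fixed enumeration of $A$, the first conjunct $\phi_{\!\mathcal{A}(\tuple{r})}(\tuple{b})$ records exactly that $r_i\mapsto b_i$ preserves every positive fact of $\mathcal{A}$. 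The key leverage of the second conjunct comes from the choice $m=|\mathcal{B}|$: I would instantiate the universal variables $w_1,\ldots,w_m$ by a bijection $\pi$ onto $B$. Applying the satisfaction of the second conjunct to this particular assignment produces a single disjunct index $\tuple{t}=(t_1,\ldots,t_m)\in A^m$ with $\mathcal{B}\models\phi_{\!\mathcal{A}(\tuple{r},\tuple{t})}(\tuple{b},\pi(\tuple{w}))$.

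With these in hand I would define the candidate hypermorphism $f\colon A\to\mathcal{P}(B)$ by $f(r_i):=\{b_i\}\cup\{\pi(w_j):t_j=r_i\}$; that is, $f$ collects for each element of $A$ its existential witness together with every universal value whose chosen antecedent under $\tuple{t}$ is that element. Totality is immediate from $b_i\in f(r_i)$, and surjectivity is immediate from $\pi$ being onto: every $b=\pi(w_j)$ lies in $f(t_j)$.

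The main obstacle is the preserving property, which is a position-tracking bookkeeping over the combined tuple $(\tuple{r},\tuple{t})$ of length $|A|+m$, whose associated variables are $(\tuple{v},\tuple{w})$. Suppose $R(r_{\lambda_1},\ldots,r_{\lambda_k})$ holds in $\mathcal{A}$ and pick arbitrary $c_l\in f(r_{\lambda_l})$. By construction each $c_l$ is the value taken in $\mathcal{B}$ at some position $p_l$ of $(\tuple{v},\tuple{w})$ whose entry in $(\tuple{r},\tuple{t})$ is $r_{\lambda_l}$ --- either $p_l=\lambda_l$ (when $c_l=b_{\lambda_l}$) or $p_l$ is the index of a universal variable $w_j$ with $t_j=r_{\lambda_l}$ and $c_l=\pi(w_j)$. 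Since the entries of $(\tuple{r},\tuple{t})$ at positions $p_1,\ldots,p_k$ are precisely $r_{\lambda_1},\ldots,r_{\lambda_k}$ and $R(r_{\lambda_1},\ldots,r_{\lambda_k})$ holds in $\mathcal{A}$, the atom $R$ on the corresponding variables occurs in $\phi_{\!\mathcal{A}(\tuple{r},\tuple{t})}$ by the definition of the extended canonical conjunctive query. As $\mathcal{B}\models\phi_{\!\mathcal{A}(\tuple{r},\tuple{t})}(\tuple{b},\pi(\tuple{w}))$, that atom is satisfied, which is exactly $R(c_1,\ldots,c_k)$ in $\mathcal{B}$; hence $f$ is preserving. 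The one point to handle with care is that the same $\mathcal{A}$-element may occur at several positions of $(\tuple{r},\tuple{t})$, so one selects, for each $c_l$ independently, a position realising it --- the definition of $f$ guarantees such a position always exists. This completes the verification that $f$ is a surjective hypermorphism from $\mathcal{A}$ to $\mathcal{B}$.
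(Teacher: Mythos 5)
Your proof is correct and follows essentially the same route as the paper's: instantiate the universal block by an enumeration (bijection) of $B$, extract the witnessing disjunct $\tuple{t}$, and define $f$ as the union of the totality part $r_i \mapsto \{b_i\}$ and the surjectivity part $t_j \mapsto \{\pi(w_j)\}$, with preservation read off from $\mathcal{B}\models\phi_{\!\mathcal{A}(\tuple{r},\tuple{t})}$. Your position-tracking verification of the preserving property is in fact a slightly more careful rendering of the paper's argument, which handles the mixed $h$/$g$ cases only ``w.l.o.g.''
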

\begin{proof}
  Let $\tuple{b'}:=b'_1,\ldots,b'_{|A|}$ be witnesses for $v_1,\ldots,v_{|A|}$.
  Assume that an enumeration $\tuple{b}:=b_1,b_2,\ldots,b_{|B|}$ of the elements of
  $\mathcal{B}$  is chosen for the
  universal variables $w_1,\ldots w_{|\mathcal{B}|}$.
  Let $\tuple{t} \in A^{m}$ be the witness s.t. 
  $\mathcal{B}\models 
  \phi_{\!\mathcal{A}(\tuple{r})}(\tuple{b'})
  \land
  \phi_{\!\mathcal{A}(\tuple{r},\tuple{t})}(\tuple{b'},\tuple{b})
  $.
  
  Let $f$ be the map from the domain of $\mathcal{A}$ to the power set
  of that of $\mathcal{B}$ which is the union of the following two
  partial hyperoperations $h$ and $g$ (\textsl{i.e.} $f(a_i):=h(a_i)\cup g(a_i)$
  for any element $a_i$ of $\mathcal{A}$), which guarantee totality
  and surjectivity, respectively.
  \begin{itemize}
  \item (\textbf{totality}) $h(a_i):=b'_i$
  \item (\textbf{surjectivity}) $g(t_i)\ni b_i$.
  \end{itemize}
  It remains to show that $f$ is preserving. This follows from
  $\mathcal{B}\models
  \phi_{\!\mathcal{A}(\tuple{r},\tuple{t})}(\tuple{b'},\tuple{b})$. 

  Let $R$ be a $r$-ary relational symbol such that
  $R(a_{i_1},\ldots,a_{i_r})$ holds in $\mathcal{A}$. Let $b''_{i_1}\in f(a_{i_1}), \ldots, b''_{i_r}\in
  f(a_r)$. We will show that $R(b''_{i_1},\ldots,b''_{i_r})$ holds in $\mathcal{B}$.
  Assume for clarity of the exposition and w.l.o.g. that from $i_1$ to $i_k$ the image is set according to
  $h$ and from $i_{k+1}$ to $i_r$ according to $g$:
  i.e. for $1\leq j \leq k$, $h(a_{i_j})=b'_{i_j}=b''_{i_j}$  and 
  for $k+1\leq j \leq r$, there is some $l_j$ such that
  $t_{l_j}=a_{i_j}$ and $g(t_{l_j})\ni b''_{i_j}=b_{l_j}$. 
  By definition of  $\mathcal{A}(\tuple{r},\tuple{t})$ the atom
  $R(v_{i_1},\ldots,v_{i_k},w_{l_{k+1}},\ldots,w_{r})$ appears  in  $\phi_{\!\mathcal{A}(\tuple{r},\tuple{t})}(\tuple{v},\tuple{w})$. 
  It follows from   $\mathcal{B}\models
  \phi_{\!\mathcal{A}(\tuple{r},\tuple{t})}(\tuple{b'},\tuple{b})$
  that $R(b''_{i_1},\ldots,b''_{i_r})$ holds in $\mathcal{B}$.
\end{proof}
\begin{theorem}[\textbf{Containment for \mylogic}]
  \label{theo:containment:mylogic}
  Let $\mathcal{A}$ and $\mathcal{B}$ be two structures.
  The following are equivalent.
  \begin{romannum}
  \item for every sentence $\varphi$ in $\mylogic$, if $\mathcal{A}\models \varphi$ then
    $\mathcal{B}\models \varphi$.
  \item The exists a surjective hypermorphism from $\mathcal{A}$ to $\mathcal{B}$.
  \item $\mathcal{B}\models \theta_{\mathcal{A},|B|}$
  \end{romannum}
  where $\Theta_{\mathcal{A,B}}$ is a canonical sentence of $\mylogic$ that is
  defined in terms of $\mathcal{A}$ and $|\mathcal{B}|$ and that
  is modelled by $\mathcal{A}$ by construction.
\end{theorem}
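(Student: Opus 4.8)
The plan is to prove the three conditions equivalent by closing the cycle (i) $\Rightarrow$ (iii) $\Rightarrow$ (ii) $\Rightarrow$ (i), exploiting that two of the three arcs are already in hand as the lemmas proved above, and that the remaining arc is an immediate consequence of the observation that $\mathcal{A}\models\theta_{\!\mathcal{A},m}$ recorded just after the definition of the canonical sentence.

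First, for (i) $\Rightarrow$ (iii), I would observe that $\theta_{\!\mathcal{A},|B|}$ is itself a sentence of $\mylogic$ --- it is built only from $\exists$, $\forall$, $\wedge$ and $\vee$ applied to the atomic formulas of the (extended) canonical conjunctive queries, with no negation --- and that $\mathcal{A}\models\theta_{\!\mathcal{A},|B|}$ holds by the observation, taking the witness for $\tuple{v}$ to be the enumeration $\tuple{a}$ of $\mathcal{A}$ and noting that each universal assignment $\tuple{t}\in A^{|B|}$ satisfies the corresponding disjunct $\phi_{\!\mathcal{A}(\tuple{r},\tuple{t})}$. Applying the hypothesis (i) to the particular sentence $\varphi:=\theta_{\!\mathcal{A},|B|}$ then yields $\mathcal{B}\models\theta_{\!\mathcal{A},|B|}$, which is precisely (iii).

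Next, (iii) $\Rightarrow$ (ii) is exactly Lemma~\ref{lemma:canonical:sentence:mylogic}, instantiated with $m=|\mathcal{B}|$: from $\mathcal{B}\models\theta_{\!\mathcal{A},|B|}$ one reads off a surjective hypermorphism from $\mathcal{A}$ to $\mathcal{B}$. Finally, (ii) $\Rightarrow$ (i) is exactly the first strategy-transfer lemma, Lemma~\ref{lemma:strategy:transfert:mylogic}: a surjective hypermorphism transports winning strategies and so preserves the truth of every $\mylogic$ sentence from $\mathcal{A}$ to $\mathcal{B}$. Stitching the three arcs together gives the claimed equivalence.

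Because the substantive content lives in the two lemmas, there is no serious obstacle left; the theorem is essentially a routine assembly, and the only points needing care are bookkeeping ones. I would double-check that the parameter is set to $m=|B|$, so that Lemma~\ref{lemma:canonical:sentence:mylogic} applies verbatim, and that $\theta_{\!\mathcal{A},|B|}$ genuinely lies in $\mylogic$ rather than in $\posFO$ --- that is, that $\bigvee_{\tuple{t}\in A^{m}}$ is a \emph{finite} disjunction (it is, since $A$ is finite) and that no equality atoms enter through the $\phi_{\!\mathcal{A}(\tuple{r},\tuple{t})}$. I would also reconcile the closing phrase's notation ``$\Theta_{\mathcal{A,B}}$'' with the $\theta_{\!\mathcal{A},|B|}$ appearing in condition (iii), since these denote the same canonical sentence.
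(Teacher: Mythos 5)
Your proof is correct and follows essentially the same route as the paper's: the cycle (i) $\Rightarrow$ (iii) $\Rightarrow$ (ii) $\Rightarrow$ (i), assembled from the observation that $\mathcal{A}\models\theta_{\!\mathcal{A},|B|}$ by construction together with Lemma~\ref{lemma:canonical:sentence:mylogic} and Lemma~\ref{lemma:strategy:transfert:mylogic}. In fact you state the middle arc more carefully than the paper itself, whose proof contains a slip in citing Lemma~\ref{lemma:canonical:sentence:mylogic} as giving ``(iii) implies (i)'' when that lemma actually yields (iii) implies (ii), exactly as you have it.
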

\begin{proof}
  By construction $\mathcal{A}\models \theta_{\mathcal{A},|B|}$, so
  (i) implies (iii). By Lemma~\ref{lemma:strategy:transfert:mylogic}, (ii) implies (i).
  By Lemma~\ref{lemma:canonical:sentence:mylogic}, (iii) implies (i).
\end{proof}
Let $U$ and $X$ be two subsets of $A$ and a surjective hypermorphism $h$ from
$\mathcal{A}$ to $\mathcal{A}$ 
that satisfies $h(U)=A$ and
$h^{-1}(X)=A$. Let $\mathcal{B}$ be the substructure of $\mathcal{A}$
induced by $B:=U\cup X$. Then $f$ and $g$, the range and domain restriction of $h$ to
$B$, respectively, are surjective hypermorphisms between $\mathcal{A}$
and $\mathcal{B}$ witnessing that $\mathcal{A}$ and $\mathcal{B}$
satisfy the same sentence of $\mylogic$. 
Note that in particular $h$ induces a retraction of $\mathcal{A}$ to a
subset of $X$; and, dually a retraction of the complement structure~\footnotemark{} of
$\mathcal{B}$ to a subset of $U$.
\footnotetext{It has the same domain as $\mathcal{A}$ and a tuple belongs to a relation
$R$ iff it did not in $\mathcal{A}$.}
Additional minimality conditions on $U$, $X$ and $U\cup
X$ ensure that $\mathcal{B}$ is minimal.\footnotemark{}
\footnotetext{This is possible since given $h_1$ s.t. $h_1(U)=A$ and
  $h_2$ such that $h_2^{-1}(X)=A$, their composition $h=h_2\circ h_1$
  satisfies both $h(U)=A$ and
  $h^{-1}(X)=A$.}
It is also \emph{unique} up to isomorphism and within $\mathcal{B}$ the set
$U$ and $X$ are \emph{uniquely determined}. Consequently, $\mathcal{B}$ is
called \textbf{the} $U$-$X$-core of $\mathcal{A}$ (for further details
see~\cite{LICS2011}) and may be defined as follows.

\begin{definition}
  \textbf{The} $U$-$X$\emph{core} $\mathcal{B}$ of a structure $\mathcal{A}$ is a minimal
  substructure of $\mathcal{A}$ such that for every sentence $\varphi$
  in $\mylogic$, $\mathcal{A}\models \varphi$ if and only if
  $\mathcal{B}\models \varphi$. 
\end{definition}

\begin{example}
  The map $h(1):=\{1,4\}, h(2):=\{2\}, h(3):=\{1,3,4\}, h(4):=\{1,4\}$
  is a surjective hypermorphism from $\mathcal{A}_4$ to
  $\mathcal{A}_4$ with $U=\{2,3\}$ and $X:=\{1,2\}$.
  The substructure induced by $U\cup X$ is $\mathcal{A}_3$.
  It can be checked that it is minimal.
\end{example}

The $U$-$X$-core is just like the core an induced substructure. There
is one important difference in that $U$-$X$-cores should be
genuinely viewed as a minimal equivalent substructure induced by \emph{two}
sets. Indeed, when evaluating a sentence of $\mylogic$, we may
assume w.l.o.g. that all $\forall$ variables range over $U$ and all $\exists$
variables range over $X$.
This is because for any play of $\forall$, we may extract a winning
strategy for $\exists$ that can even restrict herself to play only on $X$~\cite[Lemma 5]{LICS2011}.
Hence, as a \emph{preprocessing step}, one could compute $U$ and $X$ and
restrict the domain of each universal variable to $U$ and the domain
of each universal variable to $X$. 
\emph{The complexity of this processing step is no longer of the same
magnitude} and is in general much lower than solving a QCSP$_\lor$.\footnotemark{}
\footnotetext{The question of $U$-$X$-core identification is in DP (and should be complete), whereas QCSP$_\lor$ is Pspace-complete in general
}
Thus, even when taking into account the uniform nature of the instance in
a quantified constraints solver, this preprocessing step might be
exploited in practice. This could turn out to be ineffective when there are few
quantifier alternation (as in bilevel programming), but should be of
particular interest when the quantifier alternation increases. 
Another
interesting feature is that storing a winning strategy over $U$ and
$X$ together with the surjective hypermorphism $h$ from
$\mathcal{A}$ to $\mathcal{A}$, allows to recover a winning strategy
even when $\forall$ plays in an unrestricted manner. This provides a
\emph{compression mechanism} to store certificates.

\section{The case of QCSP}
\label{sec:qcsp}

In primitive positive and positive Horn logic, one normally considers equalities to be permitted. From the perspective of computational complexity of CSP and QCSP, this distinction is unimportant as equalities may be propagated out by substitution. In the case of positive Horn and QCSP, though, equality does allow the distinction of a trivial case that can not be recognised without it. The sentence $\forall x \ x=x$ is true exactly on structures of size one. The structures $\mathcal{K}_1$ and $2 \mathcal{K}_1$, containing empty relations over one element and two elements, respectively, are therefore distinguishable in $\qcsplogiceq$, but not in $\qcsplogic$. Since we disallow equalities, many results from this section apply only to \emph{non-trivial} structures of size $\geq 2$. Note that equalities can not be substituted out from $\posFO$, thus it is substantially stronger than $\mylogic$.

For $\qcsplogic$, the correct concept to transfer winning strategies
is that of \emph{surjective homomorphism from a power}.
Recall first that the \emph{product} $\mathcal{A} \times \mathcal{B}$ of two
structures $\mathcal{A}$ and $\mathcal{B}$ has domain $\{(x,y):x \in
A, y \in B\}$ and for a relation symbol $R$,
$R^{\mathcal{A}\times\mathcal{B}}:=\{\bigl((a_1,b_1),\ldots,(a_r,b_r)\bigr): (a_1,\ldots,a_r) \in R^\mathcal{A}, (b_1,\ldots,b_r) \in
R^\mathcal{B}\}$; and, similarly for a constant symbol $c$, $c^{\mathcal{A}\times\mathcal{B}}:=(c^{\mathcal{A}},c^{\mathcal{B}})$.
The \emph{$m$th power} $\mathcal{A}^m$ of $\mathcal{A}$ is
$\mathcal{A} \times \ldots \times \mathcal{A}$ ($m$ times).

\begin{lemma}[\textbf{strategy transfer}]\label{lemma:strategy:transfert:qcsplogic}
  Let $\mathcal{A}$ and $\mathcal{B}$ be two structures and $m\geq 1$ such that
  there is a surjective homomorphism from $\mathcal{A}^m$ to
  $\mathcal{B}$. Then, for every sentence $\varphi$ in $\qcsplogic$, if
  $\mathcal{A}\models \varphi$ then $\mathcal{B}\models \varphi$.
\end{lemma}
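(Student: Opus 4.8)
The plan is to mimic the structure of the proof of Lemma~\ref{lemma:strategy:transfert:mylogic}, but adapting the strategy-transfer mechanism to the setting of surjective homomorphisms from a power. Suppose $\pi$ is a surjective homomorphism from $\mathcal{A}^m$ to $\mathcal{B}$, and suppose $\{\sigma_x : \mbox{`$\exists x$'} \in \phi\}$ is a winning strategy for $\exists$ in the $(\mathcal{A},\varphi)$-game, where $\varphi$ is a sentence of $\qcsplogic$. Since $\varphi$ is positive Horn, its quantifier-free part is a single conjunction of atoms, so a winning strategy corresponds to a homomorphism from the canonical database into $\mathcal{A}$ for every play of $\forall$. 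The key idea is that I would run $m$ copies of the game on $\mathcal{A}$ in parallel, one in each coordinate of the power, and then push the whole tuple of responses through $\pi$ to get a single response in $\mathcal{B}$.

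The main construction is to define a strategy $\{\sigma'_x\}$ in the $(\mathcal{B},\varphi)$-game as follows. First I would fix, for each element $b \in B$, a preimage tuple $p(b) \in A^m$ with $\pi(p(b)) = b$; this uses surjectivity of $\pi$ and is the analogue of the $\min f^{-1}$ choice in the earlier lemma. Given an assignment $\pi_B : Y_x \to B$ to the universal variables preceding an existential variable $x$, I would lift it coordinatewise to $m$ assignments $\pi_A^{(1)}, \ldots, \pi_A^{(m)} : Y_x \to A$, by setting the value of universal variable $y$ in coordinate $j$ to be the $j$th component of $p(\pi_B(y))$. Running the original strategy $\sigma_x$ in each of the $m$ coordinates yields a tuple $(\sigma_x(\pi_A^{(1)}), \ldots, \sigma_x(\pi_A^{(m)})) \in A^m$, and I would then set $\sigma'_x(\pi_B) := \pi\bigl((\sigma_x(\pi_A^{(1)}), \ldots, \sigma_x(\pi_A^{(m)}))\bigr)$. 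The point is that each coordinate is an independent, legitimate play against $\exists$'s winning strategy in the $(\mathcal{A},\varphi)$-game, so in every coordinate the resulting assignment is a homomorphism from the canonical database $\mathcal{D}_\psi$ into $\mathcal{A}$.

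To verify that $\{\sigma'_x\}$ is winning, I would argue that the overall assignment it induces is a homomorphism from $\mathcal{D}_\psi$ to $\mathcal{B}$. Assembling the $m$ coordinatewise homomorphisms $\mathcal{D}_\psi \to \mathcal{A}$ gives a single homomorphism $\mathcal{D}_\psi \to \mathcal{A}^m$, since a tuple lies in $R^{\mathcal{A}^m}$ exactly when it lies in $R^{\mathcal{A}}$ in each coordinate. Composing with the homomorphism $\pi : \mathcal{A}^m \to \mathcal{B}$ yields the desired homomorphism $\mathcal{D}_\psi \to \mathcal{B}$, which is precisely the winning condition. Since a homomorphism from the canonical database certifies $\mathcal{B}\models\varphi$, we are done.

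The step I expect to require the most care is confirming that the coordinatewise assignments $\pi_A^{(j)}$ really do constitute legal and \emph{compatible} plays, so that the original strategy may be applied in each coordinate and the branches line up with genuine branches of the game tree on $\mathcal{A}$; this is the analogue of the ``$\min$'' bookkeeping remark in Lemma~\ref{lemma:strategy:transfert:mylogic}. One must check that the lifting respects the quantifier prefix, i.e. that the value chosen for each existential variable in coordinate $j$ depends only on the preceding universal variables as mapped into that same coordinate. Fixing the preimage choice $p$ once and for all, rather than re-choosing it per variable, ensures this consistency. The restriction to $\qcsplogic$ (no disjunction) is what keeps the argument clean: the quantifier-free part is a single conjunction of atoms, so there is a single target canonical database $\mathcal{D}_\psi$ and no need to track which disjunct is satisfied in each coordinate, which is exactly where the power-based transfer would otherwise break down.
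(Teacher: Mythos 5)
Your proof is correct and takes essentially the same approach as the paper's (which is only sketched there, deferring to \cite{LICS2008}): the ``product strategy'' that lifts each universal play through fixed preimages under the surjective homomorphism, runs the given winning strategy coordinatewise, and pushes the tuple of responses back through the homomorphism, using $\pi(p(b))=b$ so the composed assignment agrees with the adversary's play. The only presentational difference is that the paper factors the argument into two steps --- pH-equivalence of $\mathcal{A}$ and $\mathcal{A}^m$, followed by transfer along a surjective homomorphism as in the $m=1$ case of Lemma~\ref{lemma:strategy:transfert:mylogic} --- whereas you fuse them into a single construction.
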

\begin{proof}
  For $m=1$, the proof is similar to
  Lemma~\ref{lemma:strategy:transfert:mylogic}.
  A projection from $\mathcal{A}^m$ to $\mathcal{A}$ is a surjective
  homomorphism. This means that for every sentence $\varphi$ in $\qcsplogic$, if
  $\mathcal{A}^r\models \varphi$ then $\mathcal{A}\models \varphi$.
  For the converse, one can consider the ``product strategy'' which consists
  in projecting over each coordinate of $\mathcal{A}^m$ and applying the
  strategy for $\mathcal{A}$. For further details see~\cite[Lemma 1\&2]{LICS2008}.
\end{proof}
\begin{example}\label{ex:bip}
  Consider an undirected bipartite graphs with at least one edge
  $\mathcal{G}$ and $\bK_2$ the graph that consists of a single edge.
  There is a surjective homomorphism from $\mathcal{G}$ to $\bK_2$.
  Note also that $\bK_2\times \bK_2 = \bK_2+\bK_2$ (where $+$ stands for disjoint union)
  which we write as $2\bK_2$. Thus, ${\bK_2}^j=2^{j-1}\bK_2$ (as $\times$
  distributes over $+$). Hence, if $\mathcal{G}$ has no isolated element and
  $m$ edges there is a surjective homomorphism from $\bK_2^{1+\log_2
    m}$ to $\mathcal{G}$.
\end{example}
This examples provides a lower bound for $m$ which we can improve.
\begin{proposition}[lower bound]
  \label{prop:lowerbound:powerepimorphism}
  For any $m\geq 2$,
  there are structures $\mathcal{A}$ and $\mathcal{B}$ with 
  $|A|=m$ and $|B|=m+1$
  such that there
  is only a surjective homomorphism from $\mathcal{A}^j$ to $\mathcal{B}$
  provided that $j \geq |A|$.
\end{proposition}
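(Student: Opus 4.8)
The plan is to exhibit a small relational gadget whose product powers can only ``see'' all $m$ values of $A$ once the exponent reaches $m$, and to place in $\mathcal{B}$ a single extra element that is reachable precisely by those tuples realising every value. Concretely, I would take $A=\{1,\dots,m\}$ over a signature of $m$ unary symbols $P_1,\dots,P_m$, setting $P_k^{\mathcal{A}}:=A\setminus\{k\}$, so that element $i$ satisfies every $P_k$ except $P_i$. For $\mathcal{B}$ I would take $B:=A\cup\{\star\}$ with the same interpretations $P_k^{\mathcal{B}}:=A\setminus\{k\}$, so that $\star$ lies in no predicate. The one computation to record is that in the power $\mathcal{A}^j$ one has $(x_1,\dots,x_j)\in P_k^{\mathcal{A}^j}=(A\setminus\{k\})^j$ iff the value $k$ occurs in none of the coordinates. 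Hence the set of predicates satisfied by a point of $\mathcal{A}^j$ is exactly the complement of the set of values appearing among its coordinates, and a point satisfies \emph{no} predicate iff it realises all $m$ values.

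The heart of the argument is the lower bound. Since a homomorphism $h$ can only enlarge the set of satisfied unary predicates (if $\bar{x}\in P_k$ then $h(\bar{x})\in P_k$), and $\star$ satisfies none of them, any $\bar{x}$ with $h(\bar{x})=\star$ must itself satisfy no predicate, i.e.\ must realise all $m$ values. When $j<m$ a $j$-tuple omits at least one value by pigeonhole, so every point of $\mathcal{A}^j$ satisfies some predicate and none can be sent to $\star$; therefore no homomorphism $\mathcal{A}^j\to\mathcal{B}$ is surjective.

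For the matching upper bound I would define, for $j\geq m$, a map $h$ sending $\bar{x}$ to $\star$ when its coordinates realise all of $A$ and to its first coordinate $x_1$ otherwise. Checking that $h$ is a homomorphism reduces to the profile inclusion above: a tuple not realising all values omits some value, so the predicates it satisfies are among those of the present value $x_1$; and all-value tuples map to $\star$, matching their empty profile. Surjectivity is immediate once $j\geq m$, since the constant tuple $(i,\dots,i)$ maps to $i$ for each $i\in A$ (using $m\geq 2$), while a tuple listing $1,\dots,m$ in its first $m$ coordinates maps to $\star$, such a tuple existing exactly because $j\geq m$. Combining the two directions shows a surjective homomorphism $\mathcal{A}^j\to\mathcal{B}$ exists iff $j\geq |A|=m$, with $|A|=m$ and $|B|=m+1$ as required.

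The only real obstacle is getting the gadget right: the interpretations must be arranged so that $\star$ is reachable \emph{exactly} by the all-values tuples, and one must confirm that surjectivity is simultaneously achievable at the threshold $j=m$ rather than being forced higher. Everything else is routine profile-inclusion bookkeeping, and one may alternatively deduce the cases $j>m$ from $j=m$ by composing a surjection $\mathcal{A}^m\to\mathcal{B}$ with a coordinate projection $\mathcal{A}^j\to\mathcal{A}^m$.
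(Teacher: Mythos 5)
Your proof is correct, but it takes a genuinely different route from the paper's. The paper works over a \emph{fixed} signature of one binary symbol $E$ and one monadic predicate $R$: it takes $\mathcal{A}$ to be an oriented $m$-cycle with $R$ holding on all but one vertex, and $\mathcal{B}$ to be an $R$-free oriented $m$-cycle together with an $R$-satisfying self-loop; the argument is that $\mathcal{A}^j$ decomposes into a disjoint sum of oriented $m$-cycles, and only at exponent $j=m$ does a completely $R$-free $m$-cycle appear as an induced substructure, which is needed to cover the $R$-free cycle of $\mathcal{B}$ surjectively. You instead use a purely unary signature $P_1,\ldots,P_m$ whose size grows with $m$, with $P_k^{\mathcal{A}}=A\setminus\{k\}$ and an extra point $\star$ in $\mathcal{B}$ satisfying no predicate; your observation that the predicate profile of a point of $\mathcal{A}^j$ is exactly the complement of its set of coordinate values makes the lower bound a one-line pigeonhole argument ($j<m$ tuples omit a value, hence satisfy some $P_k$, hence cannot map to $\star$), and your explicit map (all-values tuples to $\star$, otherwise first coordinate) together with the projection trick for $j>m$ gives a fully verified upper bound, whereas the paper only sketches its proof. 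What each approach buys: yours is more elementary, completely self-contained, and easy to check; the paper's keeps the signature fixed and small (two symbols, independent of $m$), which is the more natural setting for the QCSP containment discussion the proposition supports --- note the paper's standing convention of a fixed relational signature, although since its own proof also chooses the signature inside the proof, the signature is formally part of the existential claim and your construction meets the stated proposition.
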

\begin{proof}[sketch]
  \begin{figure}[h]
    \centering
    \includegraphics{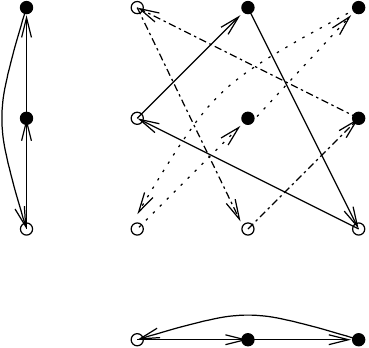}
    \caption{The power of oriented cycles is a sum of oriented cycles.}
    \label{fig:power:cycles}
  \end{figure}
  We consider a signature that consists of a binary symbol $E$
  together with a monadic predicate $R$.
  Consider for $\mathcal{A}$ an oriented cycle with $m$ vertices, for
  which $R$ holds for all but one. Consider for $\mathcal{B}$ an oriented
  cycle with $m$ vertices, for which $R$ does not hold, together with
  a self-loop on which $R$ holds.
  The square of $\mathcal{A}$ will consists of $|A|=m$ oriented cycles
  with $m$ vertices: one cycle will be a copy of $\mathcal{A}$, all
  the other will be similar but with two vertices on which $R$ does
  not hold (this is depicted on Figure~\ref{fig:power:cycles} in the
  case $m=3$: white
  vertices do not satisfy $R$ while black ones do).
  It is only for $j=m$ that we will get as an
  induced substructure of $\mathcal{A}^j$ one copy of an oriented cycle on which $R$ does
  not hold as in $\mathcal{B}$.
\end{proof}
There is also a \emph{canonical $\qcsplogic$-sentence} which turns out
to be in $\Pi_2$-form, that is with a quantifier prefix of the form
$\forall^\star\exists^\star$. We consider temporarily structures with
$m$ constants $c_1,c_2,\ldots,c_m$; let $\tuple{t}$ in $A^{m}$
describe the position of these constants in a structure $\mathcal{A}$;
and, write $\mathcal{A}_\tuple{t}$ for the corresponding structure with
constants. 
We consider the canonical
conjunctive query of the structure with constants
$\bigotimes_{\tuple{t}\in A^{m}}\mathcal{A}_\tuple{t}$, 
(where $\bigotimes$ denote the product), identifying the constants
with some variables $\tuple{w}=w_1,\ldots,w_m$ and using variables
$\tuple{v}$ for the other elements. We turn this quantifier-free
formula into a sentence by adding the prefix
$\forall\tuple{w}\exists\tuple{v}$.
Keeping this in mind, we can also give the following direct
definition, but it dilutes the intuition somewhat.
\begin{definition}[\textbf{Canonical $\qcsplogic$ sentence}]
  \label{def:canonical:pi2:sentence}
  Let $\mathcal{A}$ be a structure and $m>0$.
  Let $\tuple{r}$ be an enumeration of the elements of $\tilde{\mathcal{A}}:=\mathcal{A}^{|A|^m}$.
  $$\psi_{\!\mathcal{A},m}:=
  \forall \tuple{w}
  \exists \tuple{v}
  \phi_{\!\tilde{\mathcal{A}}(\tuple{r})}(\tuple{v})
  \land
  \bigwedge_{\tuple{t} \in A^{m}}
  w_1=v_{\tuple{t},\tuple{t}[1]} \ldots \land  w_m=v_{\tuple{t},\tuple{t}[m]}
  .
  $$
\end{definition}
Observe that we may propagate the equalities out of
$\psi_{\!\mathcal{A},m}$ to obtain an equivalent sentence:
e.g. we remove $w_1=v_{\tuple{t},\tuple{t}[1]}$ and replace every
occurrence of $v_{\tuple{t},\tuple{t}[1]}$ by $w_1$. 
Observe also that $\mathcal{A}\models \psi_{\!\mathcal{A},m}$.
Indeed, assume that $\tuple{t}\in A^{m}$ is the assignment chosen for the
universal variables $\tuple{w}$. There is a natural projection from
$\bigotimes_{\tuple{t}\in A^{m}}\mathcal{A}_\tuple{t}$ to
$\mathcal{A}_\tuple{t}$ which is a homomorphism. This homomorphism
corresponds precisely to a winning strategy for the existential variables $\tuple{v}$. 
\begin{theorem}[\textbf{Containment for \qcsplogic}]
  \label{theo:containment:qcsplogic}
  Let $\mathcal{A}$ and $\mathcal{B}$ be two non-trivial structures.
  The following are equivalent.
  \begin{romannum}
  \item for every sentence $\varphi$ in $\qcsplogic$, if $\mathcal{A}\models \varphi$ then
    $\mathcal{B}\models \varphi$.
  \item There exists a surjective homomorphism from $\mathcal{A}^r$ to
    $\mathcal{B}$, with $r\leq |A|^{|B|}$.
  \item $\mathcal{B}\models \psi_{\mathcal{A},|B|}$
  \end{romannum}
  where $\psi_{\mathcal{A},|\mathcal{B}|}$ is a canonical sentence of
  $\qcsplogic$ with quantifier prefix $\forall^{|\mathcal{B}|}\exists^\star$ that is
  defined in terms of $\mathcal{A}$ and modelled by $\mathcal{A}$ by construction.
\end{theorem}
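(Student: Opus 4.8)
The plan is to establish the cycle of implications (i) $\Rightarrow$ (iii) $\Rightarrow$ (ii) $\Rightarrow$ (i), exactly mirroring the proof of Theorem~\ref{theo:containment:mylogic} for $\mylogic$. The two ingredients already in hand are the observation, made just before the statement, that $\mathcal{A}\models\psi_{\!\mathcal{A},m}$ for every $m$, and the strategy transfer Lemma~\ref{lemma:strategy:transfert:qcsplogic}, which says that a surjective homomorphism from a power of $\mathcal{A}$ onto $\mathcal{B}$ suffices to transfer satisfaction of $\qcsplogic$-sentences from $\mathcal{A}$ to $\mathcal{B}$.

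For (ii) $\Rightarrow$ (i) there is nothing to do beyond invoking Lemma~\ref{lemma:strategy:transfert:qcsplogic} with the given $r$. For (i) $\Rightarrow$ (iii) I would take $\varphi:=\psi_{\!\mathcal{A},|B|}$: since $\mathcal{A}\models\psi_{\!\mathcal{A},|B|}$ by construction, (i) immediately yields $\mathcal{B}\models\psi_{\!\mathcal{A},|B|}$. The one point requiring care here is that $\psi_{\!\mathcal{A},|B|}$ must be a genuine sentence of $\qcsplogic$, that is equality-free, which is where the hypothesis of non-triviality enters. As already noted, the equalities $w_i=v_{\tuple{t},\tuple{t}[i]}$ can be propagated out by substitution, and when $|A|\geq 2$ no two constant-variables can be forced equal (one can always pick a $\tuple{t}$ separating coordinates $i$ and $j$), so the substitution leaves a well-formed equality-free sentence rather than a residual pure-equality statement of the kind $\forall x\,\forall y\; x=y$ that distinguishes trivial structures.

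The substance of the argument is (iii) $\Rightarrow$ (ii). Assuming $\mathcal{B}\models\psi_{\!\mathcal{A},|B|}$, I would instantiate the universally quantified block $\tuple{w}=w_1,\ldots,w_{|B|}$ with a surjective enumeration $b_1,\ldots,b_{|B|}$ of the domain $B$ --- possible precisely because there are $|B|$ universal variables. The existential block then provides witnesses $\tuple{v}$ satisfying $\phi_{\!\tilde{\mathcal{A}}(\tuple{r})}(\tuple{v})$ together with the equalities. Because $\phi_{\!\tilde{\mathcal{A}}(\tuple{r})}$ is the canonical conjunctive query of $\tilde{\mathcal{A}}=\mathcal{A}^{|A|^{|B|}}$, the assignment sending each $r_k$ to the witness for $v_k$ is a homomorphism from $\tilde{\mathcal{A}}$ to $\mathcal{B}$; and the equality conjuncts force the element of $\tilde{\mathcal{A}}$ playing the role of the $i$-th constant to be sent to $b_i$. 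Since the $b_i$ exhaust $B$, this homomorphism is surjective, and it is a homomorphism out of $\mathcal{A}^r$ with $r=|A|^{|B|}$, giving exactly the claimed bound $r\leq|A|^{|B|}$.

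I expect the main obstacle to be the bookkeeping in this last step: unwinding the index $v_{\tuple{t},\tuple{t}[i]}$ so as to identify cleanly which element of the power $\tilde{\mathcal{A}}=\bigotimes_{\tuple{t}\in A^{|B|}}\mathcal{A}_{\tuple{t}}$ realises the $i$-th constant, and checking that pinning the constants to a surjective tuple really does force surjectivity of the induced homomorphism. The non-triviality hypothesis, though it enters only lightly above, is genuinely necessary: for $\mathcal{A}=\mathcal{K}_1$ and $\mathcal{B}=2\mathcal{K}_1$ condition (i) holds while (ii) fails (there is no surjection from one element onto two), so the equivalence cannot extend to trivial templates.
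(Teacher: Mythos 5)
Your proof is correct and follows essentially the same route as the paper's (sketched) proof: (ii)$\Rightarrow$(i) via Lemma~\ref{lemma:strategy:transfert:qcsplogic}, (i)$\Rightarrow$(iii) since $\mathcal{A}\models\psi_{\!\mathcal{A},|B|}$ by construction, and (iii)$\Rightarrow$(ii) by instantiating the universal block with an enumeration of $B$ so that the existential witnesses yield a surjective homomorphism from $\mathcal{A}^{|A|^{|B|}}$ to $\mathcal{B}$. Your added care about propagating the equalities out of $\psi_{\!\mathcal{A},|B|}$ and the role of non-triviality there (distinct projections when $|A|\geq 2$, versus the residual $w_i=w_j$ for trivial $\mathcal{A}$) is a correct filling-in of a detail the paper leaves implicit by deferring to~\cite{LICS2008}.
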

\begin{proof}[sketch]
  (ii) implies (i) by Lemma~\ref{lemma:strategy:transfert:qcsplogic}.
  (i) implies (iii) since $\mathcal{A}$ models
  $\psi_{\mathcal{A},|\mathcal{B}|}$.
  (iii) implies (ii) by construction of
  $\psi_{\mathcal{A},|\mathcal{B}|}$. We may chose for the universal
  variables $\tuple{w}$ an enumeration of $\mathcal{B}$. The winning
  strategy on $\mathcal{B}$ induces a surjective homomorphism from
  $\mathcal{A}^r$ (for further details see~\cite[Theorem 3]{LICS2008}
  and comments on the following page). 
\end{proof}
Following our approach for the other logics, we now define a minimal
representative as follows.
\begin{definition}
  \textbf{A} \emph{Q-core} $\mathcal{B}$ of a structure $\mathcal{A}$ is a minimal
  substructure of $\mathcal{A}$ such that for every sentence $\varphi$
  in $\qcsplogic$, $\mathcal{A}\models \varphi$ if and only if
  $\mathcal{B}\models \varphi$. 
\end{definition}
\begin{figure}[h]
  \centering
  \subfloat[$\mathcal{A}_2\times \mathcal{A}_2$.]{\label{fig:squareA2}
    \begin{minipage}[c]{.3\textwidth}
      \centering
      \input{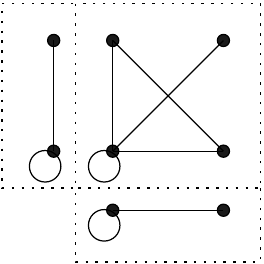_t}
    \end{minipage}
}
  \qquad
  \subfloat[Homomorphism to $\mathcal{A}_3$.]{\label{fig:surjhom:squareA2:A3}
    \begin{minipage}[c]{.3\textwidth}
      \centering
      \input{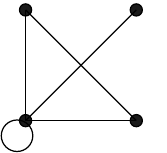_t}
    \end{minipage}
}
  \qquad  
  \subfloat[$\mathcal{A}_3$]{\label{fig:A3}
    \begin{minipage}[c]{.2\textwidth}
      \centering
      \input{DifCores3.pdf_t}
    \end{minipage}
}
  \qquad
  \caption{Surjective homomorphism from a power.}
  \label{fig:squareA3}
\end{figure}

\begin{example}
  Consider $\mathcal{A}_3$ and $\mathcal{A}_2$ from Table~\ref{tab:different-cores}.
  The map $f(1):=1,f(2):=2, f(3):=2$ is a surjective homomorphism from
  $\mathcal{A}_3$ to $\mathcal{A}_2$. 
  The square of $\mathcal{A}_2$ is depicted on Figure~\ref{fig:squareA2};
  and, a surjective homomorphism from it to $\mathcal{A}_3$ is
  depicted on Figure~\ref{fig:surjhom:squareA2:A3}.
  Thus $\mathcal{A}_3$ and $\mathcal{A}_2$ are equivalent
  w.r.t. \qcsplogic.
  One can also check that $\mathcal{A}_2$ is minimal and is therefore
  a Q-core of $\mathcal{A}_3$, and \textsl{a posteriori} of
  $\mathcal{A}_4$.
\end{example}
The behaviour of the Q-core differs from its cousins the core and the
$U$-$X$-core.
\begin{proposition}
  The Q-core of a $3$-element structure $\mathcal{A}$ is not always an induced substructure of $\mathcal{A}$.
\end{proposition}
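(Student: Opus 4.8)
The plan is to use the running example as the witness, taking $\mathcal{A} := \mathcal{A}_3$ of Table~\ref{tab:different-cores}. By the example preceding this statement the two-element structure $\mathcal{A}_2$ is a Q-core of $\mathcal{A}_3$; in particular a Q-core of $\mathcal{A}_3$ has exactly two elements (the non-trivial $\mathcal{A}_3$ is not $\qcsplogic$-equivalent to any single-element structure). The proposition will therefore follow once I show that \emph{no} two-element \emph{induced} substructure of $\mathcal{A}_3$ is $\qcsplogic$-equivalent to $\mathcal{A}_3$: for then the two-element Q-core $\mathcal{A}_2$, being a substructure that is \emph{not} one of these, must be a proper (non-induced) weak substructure, dropping at least one tuple that the induced structure on the same domain retains.

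The key step is thus to run, for each of the three two-element subsets $S$ of the domain of $\mathcal{A}_3$, the equivalence test supplied by Theorem~\ref{theo:containment:qcsplogic}. Writing $\mathcal{A}_3 \restriction S$ for the induced substructure, I would check the two containments separately: a surjective homomorphism from a power of $\mathcal{A}_3$ onto $\mathcal{A}_3 \restriction S$, and one from a power of $\mathcal{A}_3 \restriction S$ onto $\mathcal{A}_3$. In each case one of these fails, and I would record the failure either by producing the canonical sentence $\psi_{\mathcal{A}_3 \restriction S,\,|\mathcal{A}_3|}$ that separates the two structures, or directly by showing that no power can surject in the offending direction. The obstruction is localised to a single tuple: the induced substructure $\mathcal{A}_3 \restriction S$ carries a tuple (a self-loop, in the running example) whose presence is precisely what blocks the surjection, whereas the weak substructure $\mathcal{A}_2$, obtained by deleting that tuple, does admit surjective homomorphisms in both directions, namely $f \colon \mathcal{A}_3 \to \mathcal{A}_2$ and $\mathcal{A}_2^2 \to \mathcal{A}_3$ exhibited in the example.

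Combining the two, for the support $S_0$ of $\mathcal{A}_2$ the induced structure $\mathcal{A}_3 \restriction S_0$ is not equivalent to $\mathcal{A}_3$ while its proper weak substructure $\mathcal{A}_2$ is; hence the Q-core on $S_0$ is forced to be the non-induced $\mathcal{A}_2$. The main obstacle is the verification of non-equivalence, since surjective homomorphisms must be sought out of powers and these grow quickly; the point that makes the check finite is the exponent bound $r \leq |A|^{|B|}$ of Theorem~\ref{theo:containment:qcsplogic}, and in practice the relevant surjections are controlled by coarse invariants — in particular the pattern of self-loops and their behaviour under the product, as displayed for $\mathcal{A}_2^2$ in Figure~\ref{fig:squareA2}.
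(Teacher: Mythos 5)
Your choice of witness is wrong, and the gap is fatal rather than cosmetic: the Q-core $\mathcal{A}_2$ of $\mathcal{A}_3$ \emph{is} an induced substructure of $\mathcal{A}_3$. In the paper's labelling, $\mathcal{A}_3\cong\mathcal{P}_{110}$ with the self-loops on vertices $2$ and $3$, and the run of Algorithm~\ref{algo:BoundedSearchQcore} in Example~\ref{example:interleaving:UXcore:Qretract} exhibits $\mathcal{A}_2$ as the substructure induced by $U'\cup X'=\{1,2\}$ of $\mathcal{D}$; since $\mathcal{D}$ differs from $\mathcal{A}_3$ only by the loop at vertex $3$, this is exactly the substructure of $\mathcal{A}_3$ induced by $\{1,2\}$, namely $\mathcal{P}_{10}$ (the edge together with the loop on the central vertex), with no tuple dropped. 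This makes your plan internally inconsistent: you assert both that $\mathcal{A}_2$ is $\qcsplogic$-equivalent to $\mathcal{A}_3$ (true, via $f$ and the surjective homomorphism $\mathcal{A}_2^2 \surhom \mathcal{A}_3$ of Figure~\ref{fig:surjhom:squareA2:A3}) and that no two-element induced substructure of $\mathcal{A}_3$ is equivalent to it --- yet $\mathcal{A}_2$ is itself such an induced substructure, so your ``key step'' fails precisely at $S_0=\{1,2\}$, and the very homomorphisms you cite are the witnesses of that failure. You appear to have conflated the Q-core with the intermediate structure $\mathcal{P}_{010}$ of Example~\ref{example:interleaving:UXcore:Qretract}, where a loop is indeed deleted; but that non-induced structure is only a waypoint of the algorithm, not the Q-core. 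No structure in the running example (partially reflexive paths over the pure graph signature) witnesses the proposition, which is presumably why the paper does not use it.

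What the paper actually does is build a purpose-made witness over the richer signature $\langle E,R,G\rangle$ with two unary predicates: $\mathcal{A}$ and $\mathcal{B}$ share the domain $\{1,2,3\}$ and the same binary relation $E=\{(1,1),(2,3),(3,2)\}$, but the unary relations shrink from $R^{\mathcal{A}}=\{1,2\}$, $G^{\mathcal{A}}=\{1,3\}$ to $R^{\mathcal{B}}=G^{\mathcal{B}}=\{1\}$. Equivalence holds because $\mathcal{B}\surhom\mathcal{A}$ via the identity and $\mathcal{A}^2\surhom\mathcal{B}$ because the square of the $2$-cycle contributes an $E$-edge none of whose vertices satisfies $R$ or $G$; minimality (no two-element equivalent structure) is forced by a connectivity argument: any equivalent $\mathcal{C}$ must contain the core (the looped point with $R$ and $G$), must contain a non-loop, and must satisfy $\forall x\exists y\, E(x,y)$, whence every power of $\mathcal{C}$ is connected, while $\mathcal{B}$ is disconnected. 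Note the shape of the conclusion: the non-induced Q-core retains the full three-element domain and drops \emph{unary} tuples, so your framing --- a two-element Q-core forced to be non-induced because the induced two-element candidates fail --- is not how the phenomenon arises; your proposal supplies neither a viable witness nor the ruling-out of smaller equivalent structures that the paper's connectivity argument provides.
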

\begin{proof}
  Consider the signature $\sigma:=\langle E,R,G\rangle$ involving a
  binary relation $E$ and two unary relations $R$ and $G$. Let
  $\mathcal{A}$ and $\mathcal{B}$ be structures with domain
  $\{1,2,3\}$ with the following relations. 
  \[ 
  \begin{array}{ccc}
    E^{\mathcal{A}}:=\{(1,1),(2,3),(3,2)\} & R^\mathcal{A}:=\{1,2\} & G^\mathcal{A}:=\{1,3\} \\
    E^{\mathcal{B}}:=\{(1,1),(2,3),(3,2)\} & R^\mathcal{B}:=\{1\} & G^\mathcal{B}:=\{1\}
  \end{array}
  \]
  Since $\mathcal{B}$ is a substructure of $\mathcal{A}$, we have $\mathcal{B} \surhom \mathcal{A}$.
  Conversely, the square of $\mathcal{A}^2$ contains an edge that has
  no vertex in the relation $R$ and $G$, which ensures that $\mathcal{A}^2 \surhom \mathcal{B}$
  (see Figure~\ref{fig:qcores}).
  Observe also that no two-element structure $\mathcal{C}$, and \emph{a
  fortiori} no two-element substructure of $\mathcal{A}$ agrees with
  them on $\qcsplogic$. 
  \begin{figure*}
  \centering
  \input{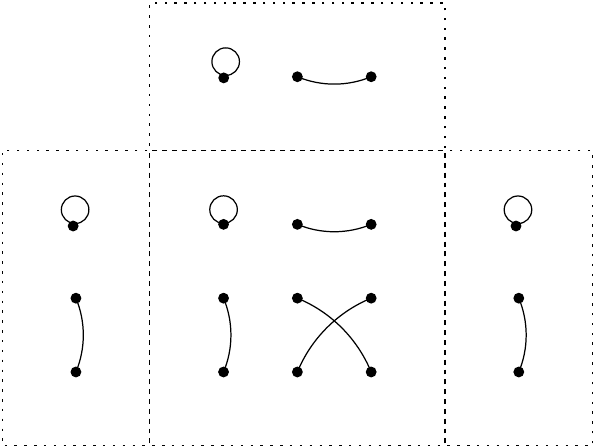_t}  
  \caption{Example of two distinct 3-element structures (signature, $E$ binary and two unary
    predicates $R$ and $G$) that are equivalent
    w.r.t. $\qcsplogic$.}
  \label{fig:qcores}
\end{figure*}
  Indeed, if a structure $\mathcal{C}$ agrees on
  $\qcsplogic$ with $\mathcal{B}$, it agrees also on $\csplogic$. Thus, the core of
  $\mathcal{B}$ is also the core of $\mathcal{C}$ and must appear as an
  induced substructure of $\mathcal{C}$. 
  This core is the one-element substructure of $\mathcal{B}$ induced
  by 1. In order to have a surjective
  homomorphism from a power of $\mathcal{C}$ to $\mathcal{B}$, this
  power must contain a non-loop, and so does $\mathcal{C}$. This
  non-loop must in $\mathcal{C}$ be adjacent to another vertex (this is a
  $\qcsplogic$-expressible property that holds in $\mathcal{B}$
  $\forall x \exists y E(x,y)$). The structure $\mathcal{C}$ would
  therefore be a two element structure satisfying 
  \[ 
  \begin{array}{ccc}
    E^{\mathcal{C}}\subseteq \{(1,1),(1,2),(2,1)\} &
    R^\mathcal{C}\subseteq \{1\}
    & G^\mathcal{C}\subseteq \{1\} \\  
  \end{array}
  \]
  A power of $\mathcal{C}$ would therefore be connected, which is not
  the case of $\mathcal{B}$, preventing the existence of any
  surjective homomorphism.
\end{proof}
We do not know whether the Q-core of a structure is unique. We will explore in the
following section Q-cores over some special classes and show that it behaves well in
these cases. 

\section{Q-cores over classes}

\subsection{The Boolean case}
A Boolean structure $\bB$ has domain $B:=\{0,1\}$. The results of this
section apply to arbitrary (not necessarily finite) signatures. We
give the following lemma ultimately for illustrative purposes (the
gist of its proof will be reused several times). The \emph{pH-type}
$T(b)$ of $b \in \bB$ is the set of all formulae $\phi(x)$ in one free
variable $x$ from $\qcsplogic$ such that $\bB \models \phi(x)$ (pH
stands for positive Horn).
\begin{lemma}
\label{lem:switch-aut}
If $\bB$ is a Boolean structure such that the pH-types $T(0)$ and $T(1)$ in $\bB$ coincide, then $\bB$ has an automorphism swapping $0$ and $1$.
\end{lemma}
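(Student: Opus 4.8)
The plan is to show directly that the transposition $\pi:\bB\to\bB$ with $\pi(0)=1$, $\pi(1)=0$ is an automorphism, arguing by contradiction from the hypothesis $T(0)=T(1)$. Recall that $\pi$ is an automorphism exactly when, for every relation symbol $R$ of arity $r$ and every tuple $\tuple{c}\in\{0,1\}^r$, one has $\tuple{c}\in R^{\bB}$ iff $\pi(\tuple{c})\in R^{\bB}$. Since $\pi$ is an involution, if it fails to be an automorphism we may assume without loss of generality that there are $R$ and a tuple $\tuple{b}=(b_1,\dots,b_r)\in R^{\bB}$ with $\pi(\tuple{b})\notin R^{\bB}$. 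Write $S_0:=\{i:b_i=0\}$ and $S_1:=\{i:b_i=1\}$, and let $\tuple{0}$, $\tuple{1}$ denote the constant $r$-tuples. The whole argument consists in manufacturing, from such a bad tuple, a one-free-variable $\qcsplogic$ formula lying in exactly one of $T(0)$, $T(1)$, contradicting $T(0)=T(1)$.

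The main obstacle is that positive Horn logic cannot express disequality, so over a two-element domain a single existential formula cannot pin $\tuple{b}$ down: an existential witness for the ``other'' value may always cheat and copy the value of $x$. I resolve this in two stages. If $\tuple{b}$ is constant, say $\tuple{b}=\tuple{1}$, then $R(x,\dots,x)\in T(1)=T(0)$ forces $\tuple{0}=\pi(\tuple{b})\in R^{\bB}$, a contradiction; the case $\tuple{b}=\tuple{0}$ is symmetric. So I may assume $S_0$ and $S_1$ are both nonempty. Now consider the $\qcsplogic$ (indeed primitive positive) formula $\varphi(x):=\exists y\,R(z_1,\dots,z_r)$, where $z_i:=x$ for $i\in S_0$ and $z_i:=y$ for $i\in S_1$. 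Evaluated at $x=0$ with witness $y=1$ it reconstructs $\tuple{b}$, so $\varphi\in T(0)=T(1)$; evaluated at $x=1$ the only two candidate witnesses $y\in\{0,1\}$ produce the tuples $\tuple{1}$ and $\pi(\tuple{b})$, and since $\pi(\tuple{b})\notin R^{\bB}$ I conclude $\tuple{1}\in R^{\bB}$.

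With $\tuple{1}\in R^{\bB}$ in hand, the second stage closes the gap using a universal formula, exploiting that $\forall y$ ranges over exactly $\{0,1\}$. Consider $\psi(x):=\forall y\,R(z_1',\dots,z_r')$ with $z_i':=x$ for $i\in S_1$ and $z_i':=y$ for $i\in S_0$. At $x=1$ the two instances $y=1$ and $y=0$ yield $\tuple{1}$ and $\tuple{b}$ respectively, both in $R^{\bB}$, so $\psi\in T(1)$; hence by hypothesis $\psi\in T(0)$. But at $x=0$ the instance $y=1$ yields precisely $\pi(\tuple{b})$, which is not in $R^{\bB}$, so $\psi\notin T(0)$. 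This contradiction shows that $\pi$ preserves every relation, and being a bijection of the domain it is the desired automorphism swapping $0$ and $1$.

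I expect the delicate point to be the bookkeeping of which substitution pattern produces which of the four tuples $\tuple{0},\tuple{1},\tuple{b},\pi(\tuple{b})$ under the assignments $x,y\in\{0,1\}$; everything else is routine once one accepts the two-stage scheme of first forcing a constant tuple into $R^{\bB}$ by an existential formula and then deriving the contradiction by a universal one. This existential-then-universal manufacture of separating positive Horn formulas is exactly the gist that, as announced, the later lemmas reuse.
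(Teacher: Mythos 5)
Your proof is correct and takes essentially the same route as the paper's: from a bad tuple $\tuple{b}$ you manufacture one-free-variable positive Horn formulas by substituting $x$ and a single quantified variable $y$ into the failing atom, and you reach the contradiction once via an existential formula and once via a universal one. The differences are only organizational: you treat the constant-tuple case explicitly (the paper absorbs it into its ``w.l.o.g.'') and you linearize the paper's case split on $\theta(0,0)$ into two sequential stages, your first stage (forcing $\tuple{1} \in R^{\bB}$) being exactly the paper's existential case and your second stage its universal case.
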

\begin{proof}
Suppose there is no such automorphism, then \mbox{w.l.o.g.} we may assume there exists a conjunction of atoms $\theta(x,y)$, involving only variables $x$ and $y$, such that $\bB \models \theta(0,1)$ but $\bB \notmodels \theta(1,0)$. Now, $\bB \models \theta(0,0)$ iff $\bB \models \theta(1,1)$, since $0$ and $1$ are of the same pH-type.  If $\bB \models \theta(0,0)$ then $\bB \models \forall x \theta(x,0)$ but $\bB \notmodels \forall x \theta(x,1)$ (contradiction). Similarly, if $\bB \notmodels \theta(0,0)$ then $\bB \models \exists x \theta(0,x)$ but $\bB \notmodels \exists x \theta(1,x)$ (contradiction).
\end{proof}  
\begin{theorem}
\label{thm:Boolean}
  Let $\bA$ and $\bB$ be Boolean structures that are equivalent
  w.r.t. $\qcsplogic$. Then $\bA$ and $\bB$ are isomorphic.
\end{theorem}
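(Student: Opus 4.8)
The plan is to reconstruct each relation of $\bA$ (and of $\bB$) from the $\qcsplogic$-sentences it satisfies, and then read off an isomorphism. Write $T_\bA(a)$ for the pH-type of $a\in\bA$. The starting observation is that wrapping a one-free-variable pH-formula $\phi(x)$ by $\forall x$ or by $\exists x$ yields $\qcsplogic$-sentences, so $\qcsplogic$-equivalence of $\bA$ and $\bB$ transfers membership in the intersection of the two pH-types (via $\forall x\,\phi$) and in their union (via $\exists x\,\phi$). In particular $T_\bA(0)=T_\bA(1)$ holds if and only if $T_\bB(0)=T_\bB(1)$ holds, and this splits the argument according to whether the two domain elements are pH-indistinguishable.

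First I would treat the case where the elements are distinguishable, i.e. $T_\bA(0)\neq T_\bA(1)$. Pick a pH-formula $\delta(x)$ lying in exactly one of $T_\bA(0),T_\bA(1)$; by the transfer of union and intersection, $\delta$ is satisfied by exactly one element of $\bB$ as well, and relabelling $\bB$ by the transposition if necessary I may assume $\delta$ singles out $0$ in both structures. Then for every pH-formula $\psi(x)$ the sentence $\exists x\,(\delta(x)\wedge\psi(x))$ detects precisely whether $0$ satisfies $\psi$, and since this sentence transfers we obtain $T_\bA(0)=T_\bB(0)$. To pin down the other element I distinguish two subcases. If some pH-formula also singles out $1$, the same argument gives $T_\bA(1)=T_\bB(1)$, and I can test membership of an arbitrary tuple $\bar t$ in any relation $R$ by the sentence that existentially quantifies a $\delta$-witness for each $0$-coordinate and a witness of the formula singling out $1$ for each $1$-coordinate of $\bar t$. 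If instead one type strictly contains the other, say $T_\bA(1)\subsetneq T_\bA(0)$, then, restricting to primitive positive formulas, there is an endomorphism of $\bA$ sending $1\mapsto 0$; as a bijective endomorphism of a finite structure is an automorphism, this endomorphism cannot be the transposition (which would equalise the types), so it is the constant map onto $0$, forcing the all-$0$ tuple into every non-empty relation. In this subcase I reconstruct a mixed tuple $\bar t\in R$ by the sentence $\exists x\,(\delta(x)\wedge\forall y\,R(\ldots))$, placing $x$ in the $0$-coordinates and the single $\forall$-variable $y$ in the $1$-coordinates: this is true exactly when both the all-$0$ tuple and $\bar t$ lie in $R$, and since the all-$0$ membership is already known while the absorbing property forces $\bar t\notin R$ whenever the all-$0$ tuple is absent, $\bar t\in R^\bA\Leftrightarrow\bar t\in R^\bB$ in every case. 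In all subcases the (relabelled) identity is an isomorphism.

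For the remaining case $T_\bA(0)=T_\bA(1)$, Lemma~\ref{lem:switch-aut} gives that both $\bA$ and $\bB$ have the transposition as an automorphism, so every relation of either structure is invariant under swapping $0$ and $1$. It then suffices to show, for each relation $R$ and each two-element swap-orbit $O$ of tuples, that $O\subseteq R^\bA\Leftrightarrow O\subseteq R^\bB$. I would isolate $O$ with a $\Pi_2$ pH-sentence: the diagonal content (orbits of constant tuples) is detected by formulas $R(x,\ldots,x)$, and a general orbit is isolated by a sentence that existentially quantifies the coordinates carrying one value and universally quantifies those carrying the other, the universal quantifier forcing the presence of the corresponding diagonal companions so that, given the already-determined diagonal content, the sentence is true exactly when $O\subseteq R$. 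Since all these sentences transfer, the relations coincide and the identity is again an isomorphism.

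The main obstacle is this last reconstruction step: turning the coarse information ``some tuple of such-and-such shape lies in $R$'' into exact orbit-by-orbit membership, particularly in the swap-symmetric case and for arities greater than two, where a naive existential sentence conflates an orbit with the constant tuples. The device that resolves it is the controlled interleaving of $\exists$ and $\forall$ over the single ``other'' element, together with the separately recovered diagonal content; verifying that this pins down each orbit for all arities is the delicate part, whereas assembling the matched relations into the isomorphism is then immediate.
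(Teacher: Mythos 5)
Your top-level decomposition is the same as the paper's (pH-types equal, incomparable, or strictly nested), and your two distinguishable cases are essentially sound. In fact, in the incomparable subcase your primitive positive sentence that pins both elements simultaneously (a $\delta$-witness for the $0$-coordinates, a $\theta_1$-witness for the $1$-coordinates) reconstructs each relation directly, which is arguably cleaner than the paper's atom-by-atom contradiction argument. In the nested subcase your constant-$0$ endomorphism (``absorbing'') argument is correct, though you silently use absorption in $\bB$ as well as in $\bA$; this does hold, but needs a line: if $\psi$ is a pH formula with $\bB\models\psi(1)$ then $\bB\models\exists x\,\psi(x)$, hence $\bA\models\exists x\,\psi(x)$, hence $\psi\in T_\bA(1)\cup T_\bA(0)=T_\bA(0)=T_\bB(0)$, so $T_\bB(1)\subseteq T_\bB(0)$ and the constant-$0$ map is an endomorphism of $\bB$ too.

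The genuine gap is in the symmetric case, exactly at the point you yourself flag as delicate. In any swap-invariant Boolean structure, your $\exists x\,\forall y$ sentence for an orbit $O$ with representative $\bar{t}$ is equivalent to the \emph{conjunction} ``the diagonal orbit $\{(0,\ldots,0),(1,\ldots,1)\}$ lies in $R$ \emph{and} $O\subseteq R$.'' So it pins down $O$ only when the diagonal orbit is present; when the diagonal orbit is absent from $R$, the sentence is false regardless of $O$, and your claim that ``given the already-determined diagonal content, the sentence is true exactly when $O\subseteq R$'' fails. Concretely, for a ternary $R$ compare $R^\bA=\{(0,1,0),(1,0,1)\}$ with $R^\bB=\emptyset$: both are swap-invariant with empty diagonal content, and every sentence in your repertoire (diagonal tests and the $\exists\forall$ orbit sentences) evaluates identically on the two structures, so your argument cannot conclude that orbit memberships agree. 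These structures are in fact separated by the pure-existential pattern sentence $\exists x\,\exists y\,R(x,y,x)$, which under swap-invariance asserts the \emph{disjunction} ``diagonal orbit $\subseteq R$ or $O\subseteq R$.'' The repair is therefore to add, for each orbit, this $\exists\exists$ sentence (equivalently its $\forall y\,\exists x$ variant): with the diagonal content known and agreeing, the conjunctive sentence settles $O$ when the diagonal is present, and the disjunctive one settles it when the diagonal is absent. This missing ingredient is precisely what the paper's Lemma~\ref{lem:switch-aut} device supplies: its argument cases on whether $\theta(0,0)$ holds and uses a $\forall x\,\theta(x,\cdot)$ sentence in one branch and an $\exists x\,\theta(\cdot,x)$ sentence in the other, i.e., \emph{both} quantification patterns, whereas your reconstruction uses only one.
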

\begin{proof}
We consider the pH-types $T^\bA(0)$ and $T^\bA(1)$ of $0$ and $1$ in $\bA$, respectively (likewise with $\bB$ superscripts for $\bB$).

Case I. $T^\bA(0)=T^\bA(1)$. It follows that $T^\bA(0)=T^\bA(1)=T^\bB(0)=T^\bB(1)$, since $\phi(x) \in T^\bA(0)$ iff $\bA \models \forall x \phi(x)$ iff $\bB \models \forall x \phi(x)$ iff $\theta(x) \in T^\bB(0)$ etc. In this case the function $A \rightarrow B$ given by $0 \mapsto 0$ and $1 \mapsto 1$ is an isomorphism, as in the proof of Lemma~\ref{lem:switch-aut}. (Of course, $0 \mapsto 1$ and $1 \mapsto 0$ is also an isomorphism.)

Case II. $T^\bA(0)$ and $T^\bA(1)$ are incomparable. Let $\theta_0(x)$ be in $T^\bA(0)$ but not in $T^\bA(1)$; and let $\theta_1(x)$ be in $T^\bA(1)$ but not in $T^\bA(0)$. Let $z(0)$ be the witness of $\exists x \theta_0(x)$ in $\bB$; and let $z(1)$ be the witness of $\exists x \theta_1(x)$ in $\bB$. Since $\bA \notmodels \exists x \theta_0(x) \wedge \theta_1(x)$, $\bB \notmodels \exists x \theta_0(x) \wedge \theta_1(x)$ and $z(0)\neq z(1)$. We claim $z$ ($0 \mapsto z(0)$, $1 \mapsto z(1)$) is an isomorphism from $\bA$ to $\bB$. If not, then \mbox{w.l.o.g.} we may assume there exists an atom (maybe with variables identified) $\theta(x,y)$ such that $\bA \models \theta(0,1)$ but $\bB \notmodels \theta(z(0),z(1))$. Now, $\bA \models \theta(0,0)$ iff $\bA \models \theta_0(0) \wedge \theta(0,0)$ iff $\bA \models \exists x \theta_0(x) \wedge \theta(x,x)$ iff $\bB \models \exists x \theta_0(x) \wedge \theta(x,x)$ iff $\bB \models \theta(z(0),z(0))$. The proof is completed as in that of Lemma~\ref{lem:switch-aut}.

Case III. \mbox{W.l.o.g.} $T^\bA(0) \subseteq T^\bA(1)$ but $T^\bA(0) \neq T^\bA(1)$. Let $\theta_1(x)$ be in $T^\bA(1)$ but not in $T^\bA(0)$. Let $z(1)$ be the witness of $\exists x \theta_1(x)$ in $\bB$. $z(1)$ is unique since $\bB \notmodels \forall x \theta_1(x)$. Let $z(0)$ be the other element of $\bB$. We claim $z$ is an isomorphism from $\bA$ to $\bB$. If not, then there exists a conjunction of atoms $\theta(x,y)$ such that (we lose the \mbox{w.l.o.g.}) either $\bA \models \theta(0,1)$ but $\bB \notmodels \theta(z(0),z(1))$, or $\bB \models \theta(0,1)$ but $\bA \notmodels \theta(z(0),z(1))$. In fact, we can still deal with both cases at once, since $\bA \models \theta(1,1)$ iff $\bA \models \theta_1(1) \wedge \theta(1,1)$ iff $\bA \models \exists x \theta_1(x) \wedge \theta(x,x)$ \mbox{etc.} and $\bB \models \theta(1,1)$. The proof concludes as in Lemma~\ref{lem:switch-aut}.
\end{proof}



  In the extended logic $\qcsplogiceq$, it follows that every
  structure of size at most $2$ satisfies the ideal of core. For $\qcsplogic$ we can only say the following. 
\begin{proposition}
Every Boolean structure $\bB$ is either a Q-core, or its Q-core is the substructure induced by either of its elements. In particular, \textbf{the} Q-core of $\bB$ is unique up to isomorphism and is an induced substructure of $\bB$.
\end{proposition}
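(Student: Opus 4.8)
The plan is to use that a two-element structure $\bB$ has exactly three nonempty substructures: $\bB$ itself and the one-element substructures $\bB_0$ and $\bB_1$ induced by $\{0\}$ and $\{1\}$. Since a Q-core is by definition a \emph{minimal} substructure equivalent to $\bB$ w.r.t.\ $\qcsplogic$, and a one-element structure has no proper nonempty substructure, any one-element substructure that happens to be equivalent to $\bB$ is automatically minimal and hence a Q-core. I would therefore split on whether some one-element substructure is equivalent to $\bB$. If none is, then $\bB$ is the only substructure equivalent to itself, so $\bB$ is its own Q-core; if one is, the Q-core is a one-element induced substructure. This already gives the stated dichotomy, and what remains is uniqueness (and the justification of the word ``either'').

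The heart of the argument is a short computation that pins down when $\bB_b$ is equivalent to $\bB$ and simultaneously delivers uniqueness. The key observation is that, because $\bB_b$ has a single element, every variable of a positive Horn sentence is forced to take value $b$, so such a sentence holds in $\bB_b$ exactly when every relation symbol occurring in it is true on the all-$b$ diagonal. Writing $\tau_b := \{R : R(b,\dots,b) \text{ holds in } \bB\}$ for this diagonal type, I would test the equivalence $\bB \equiv \bB_b$ against the two sentence schemas $\exists x\, R(x,\dots,x)$ and $\forall x\, R(x,\dots,x)$. Evaluating the first gives, for every relation symbol $R$, the implication $R(1,\dots,1) \Rightarrow R(0,\dots,0)$, and the second gives $R(0,\dots,0) \Rightarrow R(1,\dots,1)$; together these force $\tau_0 = \tau_1$. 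Since a one-element structure is determined up to isomorphism by its diagonal type, this yields $\bB_0 \cong \bB_1$, and because isomorphic structures satisfy the same sentences, $\bB_b \equiv \bB$ for one choice of $b$ entails the same for the other. This justifies ``either'' and shows that all one-element Q-cores of $\bB$ are isomorphic.

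Assembling the pieces is then routine. In the case where some one-element substructure is equivalent to $\bB$, the two-element candidate $\bB$ fails to be minimal, so every Q-core is one-element; by the previous paragraph all such are isomorphic, so the Q-core is unique up to isomorphism and is visibly an induced substructure. In the case where neither $\bB_0$ nor $\bB_1$ is equivalent to $\bB$, the only substructure equivalent to $\bB$ is $\bB$ itself, so $\bB$ is its (unique) Q-core; here one may additionally invoke Theorem~\ref{thm:Boolean} to note that \emph{any} Boolean structure equivalent to $\bB$, not just any substructure, is isomorphic to it. In both cases $\bB$, $\bB_0$ and $\bB_1$ are induced, so the Q-core is an induced substructure.

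I expect the only genuinely delicate point to be the comparison of the two-element $\bB$ with a one-element structure: the clean containment characterisation of Theorem~\ref{theo:containment:qcsplogic} is stated only for non-trivial templates (surjective homomorphisms from powers degenerate when the target has one element), so I must argue directly at the level of sentences. The computation with the schemas $\exists x\, R(x,\dots,x)$ and $\forall x\, R(x,\dots,x)$ is precisely what sidesteps this obstacle and is the crux of the proof.
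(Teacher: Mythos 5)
Your key enumeration---that a two-element structure has ``exactly three nonempty substructures''---silently assumes that \emph{substructure} means \emph{induced} substructure, which is not the convention of this paper. In the definition of Q-core, substructures are weak: the domain may shrink \emph{and} tuples may be dropped. This is exactly what the proposition immediately preceding Theorem~\ref{thm:Boolean}'s section exploits (there the Q-core $\mathcal{B}$ has the same three-element domain as $\mathcal{A}$ but strictly fewer tuples in $R$ and $G$), and the later corollary on isolated elements speaks of minimality ``both w.r.t.\ domain size and number of tuples.'' Under this convention a Boolean $\bB$ has many substructures: every structure on $\{0,1\}$, $\{0\}$ or $\{1\}$ whose relations are contained in the corresponding restrictions of those of $\bB$. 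Ruling out the non-induced ones is precisely the content of the proposition, so your dichotomy assumes what is to be proved: in the case where neither $\bB_0$ nor $\bB_1$ is equivalent to $\bB$, you conclude that $\bB$ is its own Q-core without excluding a \emph{proper two-element} weak substructure (same domain, fewer tuples) equivalent to $\bB$; and in the other case you never exclude a \emph{non-induced one-element} substructure as Q-core.

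Both gaps are repairable with material you already have, and the repairs recover essentially the paper's proof. For the two-element case, Theorem~\ref{thm:Boolean}---which you invoke only as an optional afterthought (``one may additionally invoke'')---is in fact indispensable: any two-element weak substructure equivalent to $\bB$ w.r.t.\ $\qcsplogic$ is, by that theorem, isomorphic to $\bB$, hence has the same number of tuples in every relation and cannot be proper. For the one-element case, your diagonal-type computation should be run against an \emph{arbitrary} one-element structure $\bA$ equivalent to $\bB$, not only against the induced $\bB_0$ and $\bB_1$: the schema $\forall x\, R(x,\dots,x)$ forces the diagonal type of $\bA$ to equal $\tau_0 \cap \tau_1$, while $\exists x\, R(x,\dots,x)$ forces it to equal $\tau_0 \cup \tau_1$; hence $\tau_0 = \tau_1$ and $\bA$ is isomorphic to the substructure induced by either element---this is the paper's observation that the pH-types $T^{\bB}(0)$, $T^{\bB}(1)$ and $T^{\bA}(0)$ coincide. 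Your instinct to argue directly at the sentence level (because the containment theorem excludes trivial templates) is sound and matches the paper; it is only the enumeration of substructures, and the consequent misplacement of Theorem~\ref{thm:Boolean}'s role, that must be corrected.
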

\begin{proof}
If $\bB$ generates the same QCSP as a one-element structure $\bA$ with domain $\{0\}$, then it is clear that the pH-types $T^\bB(0)$, $T^\bB(1)$ and $T^\bA(0)$ coincide. This gives uniqueness and induced substructure in this case. Otherwise, the only possibility -- to violate the statement -- is for $\bB$ to have as Q-core a non-induced substructure. But this is impossible by Theorem~\ref{thm:Boolean}.
\end{proof}
Lemma~\ref{lem:switch-aut} does not extend to structures of size three. There exists $\bH_1$ of size three such that every element of $\bH_1$ has the same pH-type and yet $\bH_1$ has no non-trivial automorphism ($H_1:=\{0,1,2\}$ and $E^{\bH_1} :=\{(0,0),(1,1),(2,2),(1,2)\}$). However, $\bH_1$ is not a Q-core.


\subsection{Unary structures}

Let $\sigma$ be a fixed relational signature that consists of $n$
unary relation symbols $M_1, M_2,\ldots,M_n$. A structure over such a
signature is deemed \emph{unary}.

Let $w$ be a string of length $n$ over the alphabet $\{0,1\}$. We
write $w(x)$ as an abbreviation for the quantifier-free formula 
$\bigwedge_{1\leq i\leq n, w[i]=1} M_i(x)$.
Each element $a$ of a unary structure $\mathcal{A}$ corresponds to a
word $w$, which is the largest word bitwise such that
$\mathcal{A}\models w(x/a)$.
Let $w_\forall$ be the bitwise $\land$ of the words associated to each element.
The unary structure $\mathcal{A}$ satisfies the \emph{canonical universal sentence}
$\forall y \, w_{\,\forall}(y)$ (note that $w_\forall$ is also the largest word bitwise among such
satisfied universal formulae).  
\begin{proposition}
  The Q-core of a unary structure $\mathcal{A}$ is the unique
  substructure of $\mathcal{A}$ defined as follows.
  The Q-core of $\mathcal{A}$ is the core $\mathcal{A}'$ of
  $\mathcal{A}$ if they share the same canonical universal sentence
  and the disjoint union of $\mathcal{A}'$ with a single element
  corresponding to $w_\forall$ where $\forall y \, w_{\,\forall}(y)$ is the
  canonical universal sentence of $\mathcal{A}$.
\end{proposition}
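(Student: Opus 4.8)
The plan is to establish, for a unary structure $\mathcal{A}$ with canonical universal sentence $\forall y\, w_\forall(y)$, that the structure $\mathcal{B}$ prescribed in the statement is simultaneously $\qcsplogic$-equivalent to $\mathcal{A}$, minimal among such structures, and the unique minimal such structure up to isomorphism. Let me write $\mathcal{A}'$ for the (CSP-)core of $\mathcal{A}$. The two cases to handle separately are (a) $\mathcal{A}'$ already satisfies $\forall y\, w_\forall(y)$ — meaning $\mathcal{A}'$ and $\mathcal{A}$ share the same canonical universal sentence — in which case I claim $\mathcal{B}=\mathcal{A}'$; and (b) $\mathcal{A}'$ does \emph{not} satisfy $\forall y\, w_\forall(y)$, in which case $\mathcal{B}$ is $\mathcal{A}'$ together with one extra element realising the word $w_\forall$.

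\textbf{Equivalence.} The first step is to exploit the fact that over a purely unary signature, a $\qcsplogic$ sentence decomposes cleanly: each variable sees only monadic predicates, so a play of the game is essentially independent across variables. I would first show that $\mathcal{A} \models \varphi$ for $\varphi \in \qcsplogic$ is controlled by exactly two data: the set of realised words $\{w : \exists a,\ a \text{ realises } w\}$ (this governs the existential behaviour, hence the CSP-core $\mathcal{A}'$) and the universal word $w_\forall$ (this governs what $\forall$-variables are forced to satisfy). Concretely, since equalities are not present, a $\forall$-quantified variable can be instantiated by the adversary to \emph{any} element, so the safe information $\exists$ can rely on for a universal variable is precisely $w_\forall$. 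Thus two unary structures agree on $\qcsplogic$ iff they have the same core \emph{and} the same canonical universal sentence. In case (a), $\mathcal{A}'$ and $\mathcal{A}$ share both, so they are equivalent. In case (b), adding one element realising $w_\forall$ to $\mathcal{A}'$ leaves the set of realised words generating the same core (the new element's word $w_\forall$ is bitwise below all others, so it is homomorphically dominated and does not change the core) while correcting the universal word to $w_\forall$; hence $\mathcal{B}$ shares both invariants with $\mathcal{A}$.

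\textbf{Minimality and uniqueness.} The next step is minimality. Any substructure $\mathcal{C}\subseteq\mathcal{A}$ equivalent to $\mathcal{A}$ must, by the equivalence criterion just established, contain (up to the homomorphism ordering) the core $\mathcal{A}'$ and must realise $w_\forall$ as its universal word. In case (a) the core alone already realises $w_\forall$, so nothing smaller than $\mathcal{A}'$ can work and $\mathcal{A}'$ is forced. In case (b) the core does \emph{not} realise $w_\forall$, so at least one additional element whose word is bitwise $\leq w_\forall$ is needed to depress the universal word down to $w_\forall$; a single such element suffices and is necessary, giving $|\mathcal{B}| = |\mathcal{A}'|+1$. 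Uniqueness then follows because the CSP-core is unique up to isomorphism (Section~\ref{sec:csp}), and the word $w_\forall$ is an isomorphism invariant of the equivalence class, so the added element is determined up to isomorphism.

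\textbf{The main obstacle.} The crux — and the step I would write most carefully — is the clean characterisation ``two unary structures agree on $\qcsplogic$ iff they share both the core and the canonical universal sentence.'' The forward direction is easy, but the backward direction requires genuinely transferring winning strategies, and here I would invoke Lemma~\ref{lemma:strategy:transfert:qcsplogic}: I must exhibit a surjective homomorphism from a suitable power $\mathcal{A}^m$ onto $\mathcal{B}$ and conversely. The delicate point is that on the \emph{product} side, the word realised by a tuple $(a_1,\dots,a_m)$ is the bitwise $\land$ of the words of its coordinates, so powers of $\mathcal{A}$ realise exactly the words obtainable as such meets — and in particular, since every coordinate can be chosen freely, the power realises $w_\forall$ as a genuine single element. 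Verifying that this yields the required surjective homomorphisms in both directions (and that a single element of word $w_\forall$ is reachable as a product of realised words, which holds precisely because $w_\forall$ is the meet of all realised words) is where the real work lies; everything else is bookkeeping on monadic words.
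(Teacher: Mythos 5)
Your proposal is correct in substance and rests on the same pivotal observation as the paper's proof: over a purely unary signature no atom links two distinct variables, so every sentence of $\qcsplogic$ decomposes into variable-disjoint one-variable conjuncts of the forms $\exists x\, w(x)$ and $\forall y\, w(y)$, whence its truth is governed by exactly the two invariants you isolate (the pp-theory, i.e.\ the core, and the universal word $w_\forall$). The difference is what happens after this observation. The paper's proof stops essentially there: once the normal form is stated, \emph{both} directions of your criterion ``equivalent w.r.t.\ $\qcsplogic$ iff same core and same canonical universal sentence'' are immediate, since agreement on the two invariants is literally agreement on every one-variable conjunct; no transfer of winning strategies is required. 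The step you single out as the crux --- exhibiting surjective homomorphisms from powers via Lemma~\ref{lemma:strategy:transfert:qcsplogic}, using the (correct and pleasant) fact that the word of a tuple of $\mathcal{A}^m$ is the bitwise meet of its coordinates' words, so that $\mathcal{A}^{|A|}$ genuinely realises $w_\forall$ --- is therefore sound but superfluous: it re-derives semantically what the syntactic decomposition you already rely on gives for free. Since you invoke the decomposition anyway to identify the invariants, the heavier route buys nothing here; it would earn its keep only if one wanted the equivalence without proving the normal form.

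One slip in your minimality/uniqueness bookkeeping deserves correction. In case (b) you say the extra element must have word ``bitwise $\leq w_\forall$''; the constraint points the other way and is weaker: every element of a structure whose canonical universal sentence is $\forall y\, w_\forall(y)$ has word $\geq w_\forall$, and what is needed is one element whose word $w'$ satisfies $w' \wedge m = w_\forall$, where $m$ is the meet of the core's words. Cardinality-minimality alone does not pin $w'$ down to $w_\forall$: with maximal words $110$ and $011$ (so $m=010$) and $w_\forall=000$, a three-element equivalent substructure may have its third element of word $100$ rather than $000$, and the two candidates are not isomorphic. The added element is forced to have word exactly $w_\forall$ --- restoring the claimed uniqueness --- only when ``minimal'' is read as minimising the number of tuples as well as the domain size, which is how the paper uses the word elsewhere (and which its own two-line proof likewise leaves implicit).
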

\begin{proof}
  A positive Horn sentence over a unary signature is logically
  equivalent to a conjunction of formulae that do not share any
  variables. Each conjunct is either a universal formulae of the form
  $\forall y \, w(y)$ or an existential formulae of the form $\exists x
  \, w(x)$. 
  
  The core $\mathcal{A}'$ is a substructure of the Q-core, so we need
  only enforce that the Q-core and $\mathcal{A}'$ satisfy the same
  canonical universal sentence.
  This is achieved with the optional addition of an element corresponding to
  $w_\forall$ where $\forall y w_\forall(y)$ is the canonical
  universal sentence of $\mathcal{A}$. 
\end{proof}

\subsection{Structures with an isolated element}   

We say that a pH-sentence  is a \emph{proper pH-sentence}, if it has at least one universally quantified variable $x_1$ that occurs in some atom in the quantifier-free part.
Generally, we will say that a sentence is \emph{proper} if any variable $x_1$ occurs in some atom and we will be always working with such sentences unless otherwise stated (otherwise, we would simply discard $x_1$ and consider the equivalent sentence with one less variable). We will say that a proper pH-sentence $\psi$ induced from a proper pH-sentence $\phi$ by the removal of some conjuncts is a \emph{proper subsentence} of $\psi$.

Let $\sigma$ be a signature that consists of finitely many relation symbols $R_i$ of respective arity $r_i$. We will consider the set of minimal proper pH-sentences w.r.t. $\sigma$, that is all formulae of the form 
$\forall x_1\exists x_2 \ldots \exists x_{r_i}\, R_i(\bar{x}),$
where the tuple $\bar{x}$ is a permutation  of the variables $x_1,x_2,\ldots x_{r_i}$ where $x_1$ has been transposed with some other variable. There are $r(\sigma)= \Sigma_{R_i \in \sigma} r_i$ such formulae.

\begin{theorem}
  Let $\mathcal{A}$ be a $\sigma$-structure. The following are equivalent.
  \begin{enumerate}%
  \item $\mathcal{A}$ does not satisfy any proper pH-sentence.
  \item $\mathcal{A}$ does not satisfy any of the $r(\sigma)$ minimal proper
    pH-sentences w.r.t. $\sigma$.
  \item $\mathcal{A}^{r(\sigma)}$ contains an isolated element.
  \end{enumerate}
\end{theorem}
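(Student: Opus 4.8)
The plan is to prove the three-way equivalence via $(1)\Rightarrow(2)$ (immediate), $(2)\Rightarrow(1)$ (a short game-theoretic argument), and $(2)\Leftrightarrow(3)$ (a combinatorial analysis of the product), which together suffice.

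For $(1)\Rightarrow(2)$, I would simply observe that each of the $r(\sigma)$ minimal proper pH-sentences is itself a proper pH-sentence, its universal variable $x_1$ occurring in the unique atom; so if $\mathcal{A}$ satisfies no proper pH-sentence it satisfies none of these in particular. For $(2)\Rightarrow(1)$ I would argue contrapositively: suppose $\mathcal{A}\models\varphi$ for some proper pH-sentence $\varphi$, and let $x_1$ be a universal variable of $\varphi$ occurring in some atom $R_i(\bar t)$ of the quantifier-free part, say in position $j$. Given any $a\in A$, consider the play on $\mathcal{A}$ in which $x_1$ is assigned $a$ and the remaining universal variables arbitrarily; since $\mathcal{A}\models\varphi$, the winning strategy for $\exists$ yields an assignment $h$ making the conjunction of atoms true, and in particular $R_i(\bar t)$ true. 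Then $(h(\bar t[1]),\ldots,h(\bar t[r_i]))\in R_i^{\mathcal{A}}$ has $a$ in position $j$, and the values in the other positions witness the existential block of the minimal sentence for $(i,j)$. As $a$ is arbitrary, $\mathcal{A}$ satisfies that minimal proper pH-sentence, contradicting $(2)$.

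For $(2)\Leftrightarrow(3)$, I would introduce, for each pair $(i,j)$ with $R_i$ of arity $r_i$ and $1\leq j\leq r_i$, the set $S_{i,j}\subseteq A$ of elements occurring in position $j$ of some tuple of $R_i^{\mathcal{A}}$; then $\mathcal{A}$ satisfies the minimal sentence for $(i,j)$ exactly when $S_{i,j}=A$, so $(2)$ says every $S_{i,j}\neq A$. I would index the $r(\sigma)$ coordinates of $\mathcal{A}^{r(\sigma)}$ by these pairs. The crux is that, since a tuple lies in a product relation iff it does so coordinatewise, an element $\mathbf{a}\in A^{r(\sigma)}$ occurs in position $j$ of a tuple of $R_i^{\mathcal{A}^{r(\sigma)}}$ iff every coordinate of $\mathbf{a}$ lies in $S_{i,j}$; hence $\mathbf{a}$ is isolated iff for each $(i,j)$ some coordinate of $\mathbf{a}$ avoids $S_{i,j}$. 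Now $(2)\Rightarrow(3)$: choosing $e_{i,j}\in A\setminus S_{i,j}$ and letting $\mathbf{a}$ carry $e_{i,j}$ in coordinate $(i,j)$, that coordinate witnesses isolation. Conversely $(3)\Rightarrow(2)$: an isolated $\mathbf{a}$ supplies for each $(i,j)$ a coordinate outside $S_{i,j}$, so $S_{i,j}\neq A$.

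The main obstacle is pinning down the product characterization correctly: placing $\mathbf{a}$ into a product relation forces \emph{all} of its coordinates simultaneously into the relevant $S_{i,j}$, which is exactly why dedicating one coordinate to each relation-position pair makes $r(\sigma)$ coordinates suffice in $(2)\Rightarrow(3)$. The step $(2)\Rightarrow(1)$ also needs mild care so that repeated occurrences of $x_1$ in the atom, or coincidences among the values landing on the other positions, cause no difficulty; this is harmless because the remaining positions of the minimal sentence are distinct existential variables free to take arbitrary, possibly equal, values.
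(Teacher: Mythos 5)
Your proof is correct and takes essentially the same route as the paper: the same decomposition into $(1)\Leftrightarrow(2)$ and $(2)\Leftrightarrow(3)$, with your $(2)\Leftrightarrow(3)$ argument (one coordinate of $\mathcal{A}^{r(\sigma)}$ dedicated to each relation--position pair, carrying an element outside your $S_{i,j}$) being precisely the paper's witness construction, merely spelling out what the paper leaves as a ``simple exercise''. Your $(2)\Rightarrow(1)$ is the contrapositive of the paper's argument -- you extract a witnessing tuple for a minimal sentence directly from the winning strategy, where the paper instead weakens quantifiers of the negated minimal sentence and invokes monotonicity under removal of conjuncts -- so the content coincides.
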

\begin{proof}
  The first point implies trivially the second. We show the
  converse. Note that any proper pH-sentence $\phi$ contains as
  a proper subinstance a sentence of the form
  $Q_1 x_1 Q_2 x_2\ldots  \forall x_j \ldots Q_r x_{r}\,
  R(\bar{x})$, where the $Q_i$ represent some arbitrary quantifiers.
  By assumption, the structure $\mathcal{A}$ models $\exists
  x_j \forall x_1 \forall x_2\ldots  \forall x_{r}\, \lnot R(\bar{x})$. 
  Thus, it follows that $\mathcal{A}$ models the weaker sentence $\forall x_1 \forall x_2 \ldots,\exists x_j\ldots  \forall x_{r}$ $\lnot R(\bar{x})$ 
  (the same strategy for selecting a witness for $x_j$ will work)
  and the even weaker sentence where some universal quantifiers are
  turned to existential ones, namely those for which $Q_j$ is
  universal
  (the strategy for these new existential variable can be chosen arbitrarily). 
  So, $\mathcal{A}$ does not model the negation of this last sentence
  which is $Q_1 x_1 Q_2 x_2 \ldots  \forall x_j \ldots Q_r  x_{r}\, R(\bar{x})$, which is a subinstance of $\phi$. By
  monotonicity, $\mathcal{A}$ does not model $\phi$ either.
  
  We now prove that the second point implies the third.
  Let $\phi_i$ be the $i$th minimal proper pH-sentence w.r.t. $\sigma$.
  Let $a_i$ be a witness for the unique existential variable of $\lnot \phi_i$
  that $\mathcal{A}$ does not satisfy $\phi_i$. It is a simple exercise to
  check that $(a_1,a_2,\ldots,a_{r_i})$ is an isolated element of
  $\mathcal{A}^{r(\sigma)}$.
  
  Conversely, if $\bar{a}:=(a_1,a_2,\ldots,a_{r_i})$ is an isolated element of
  $\mathcal{A}^{r(\sigma)}$ then $\bar{a}$ is a witness that $\mathcal{A}^{r(\sigma)}$
  does not satisfy any minimal proper pH-sentence  $\phi_i$. Consequently, there exists some $a_{j_i}$ witnessing that
  $\mathcal{A}$ does not satisfy $\phi_i$ and we are done.
\end{proof}
\begin{example}
  In the case of directed graphs, the minimal proper pH-sentences are $\forall x_1 \exists x_2 E(x_1,x_2)$ and $\forall x_1
  \exists x_2 E(x_2,x_1)$. A directed graph which does not satisfy them
  will satisfy their negation $\exists x_1 \forall x_2 \lnot E(x_1,x_2)$
  and $\exists x_1 \forall x_2 \lnot E(x_2,x_1)$. 
  A witness for the existential $x_1$ in the first
  sentence will be a \emph{source}, and in the second sentence a
  \emph{sink}, respectively

  So a directed graph has a source and a sink if, and only if, it does
  not satisfy any proper pH-sentence, if and only if, its
  square has an isolated element.
\end{example}

\begin{corollary}
  \textbf{The} Q-core of a structure $\mathcal{A}$ that does not satisfy any proper pH-sentence is the unique substructure of $\mathcal{A}$ may be found as
  follows. The Q-core of $\mathcal{A}$ is the core $\mathcal{A}'$ of $\mathcal{A}$, if
  $\mathcal{A}'^{r(\sigma)}$ contains an isolated element, and the
  disjoint union of $\mathcal{A}'$ and an isolated element, otherwise.
\end{corollary}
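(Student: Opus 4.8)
The plan is to reduce $\qcsplogic$-equivalence, for the structures under consideration, to ordinary $\csplogic$-equivalence (sameness of core) together with the single extra condition that no proper pH-sentence is modelled, and then to read off the smallest representative directly.

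First I would record the elementary observation underlying everything: a pH-sentence is \emph{either} proper \emph{or}, since the domain is non-empty, logically equivalent to the $\csplogic$-sentence obtained by deleting its universal quantifiers (each such quantifier binds a variable absent from the matrix, so its removal is harmless). Hence the $\qcsplogic$-theory of a structure modelling no proper pH-sentence is entirely determined by its $\csplogic$-theory together with the lone fact ``no proper pH-sentence holds''. Because $\mathcal{A}$ models no proper pH-sentence, any $\mathcal{B}$ with the same $\qcsplogic$-theory must also model none (they must agree on every proper sentence, all false in $\mathcal{A}$), after which agreement on the non-proper sentences is exactly $\csplogic$-equivalence. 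Combined with the equivalence corollary for $\csplogic$ and the preceding theorem, this gives the working characterisation: $\mathcal{B}$ is $\qcsplogic$-equivalent to $\mathcal{A}$ iff (a) $\mathcal{B}$ has the same core $\mathcal{A}'$ as $\mathcal{A}$, and (b) $\mathcal{B}^{r(\sigma)}$ contains an isolated element.

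For existence and the stated form I would argue as follows. Since a Q-core contains the core and, by (a), every equivalent structure has core $\mathcal{A}'$, any structure equivalent to $\mathcal{A}$ has at least $|A'|$ elements, with equality precisely when it is isomorphic to $\mathcal{A}'$. If $\mathcal{A}'^{r(\sigma)}$ already contains an isolated element then $\mathcal{A}'$ satisfies both (a) and (b) and is a smallest equivalent structure, giving the first case. Otherwise $\mathcal{A}'$ fails (b), so nothing of size $|A'|$ works, and I would exhibit the disjoint union $\mathcal{B}:=\mathcal{A}'+\{e\}$ with a single fresh isolated element $e$. It satisfies (a), because the map fixing $\mathcal{A}'$ and sending $e$ anywhere into $\mathcal{A}'$ is a retraction, leaving the core unchanged; and it satisfies (b), because the constant tuple $(e,\dots,e)$ is isolated in $\mathcal{B}^{r(\sigma)}$ (an isolated point of $\mathcal{B}$ yields one in every power). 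As $|B|=|A'|+1$ and nothing smaller is equivalent, $\mathcal{B}$ is minimal.

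The delicate part, and the step I expect to be the main obstacle, is uniqueness. The first case is immediate: a smallest equivalent structure has size $|A'|$ and core $\mathcal{A}'$, hence is isomorphic to $\mathcal{A}'$. In the second case a minimal equivalent $\mathcal{B}$ carries its core $\mathcal{A}'$ as an induced substructure plus exactly one extra element $e$, with a retraction $\mathcal{B}\surhom\mathcal{A}'$ sending $e$ into $\mathcal{A}'$; to finish one must show $e$ is isolated. The natural line is to exploit (b): since $\mathcal{A}'^{r(\sigma)}$ has no isolated element, $\mathcal{A}'$ models certain minimal proper pH-sentences, and for every coordinate $(R_i,j)$ that $\mathcal{A}'$ \emph{saturates} (every element of $\mathcal{A}'$ occurs in the $j$-th position of $R_i$) the witness in $\mathcal{B}$ against the corresponding minimal proper pH-sentence cannot lie in $\mathcal{A}'$, hence must be $e$, which forces $e$ out of that coordinate. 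The crux is then to show that $\mathcal{A}'$ saturates \emph{every} coordinate, so that $e$ participates in no relation at all. This is exactly where I expect the real work to sit: a core can fail (b) while saturating only some coordinates, and one must argue that the remaining coordinates cannot be populated by the single extra element $e$ without either altering the core or reinstating a proper pH-sentence. Making this precise -- or pinning down the conditions on $\sigma$ under which it goes through -- is the heart of the proof.
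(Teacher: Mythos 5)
Your reduction of $\qcsplogic$-equivalence to pp-equivalence plus the single condition ``no proper pH-sentence holds'' is sound, and it is precisely the (largely implicit) backbone of the paper's proof; your existence argument -- the core $\mathcal{A}'$ in the first case, $\mathcal{A}'$ plus a fresh isolated element in the second, verified via the retraction and the isolated constant tuple in the power -- also matches the paper. (One small point: a Q-core is by definition a substructure of $\mathcal{A}$, so in the second case the extra element should be taken from $A \setminus A'$, which is non-empty there, giving a weak substructure of $\mathcal{A}$; the paper notes this explicitly.)

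The genuine gap is your uniqueness step, and the coordinate-saturation route you sketch cannot be repaired, because the claim you are aiming at -- that in any \emph{cardinality}-minimal equivalent structure the extra element $e$ must be isolated -- is false. Take $\sigma=\langle E,R\rangle$ with $E$ binary and $R$ unary, and let $\mathcal{A}'$ have domain $\{1,2\}$ with $E^{\mathcal{A}'}=\{(1,1),(1,2)\}$ and $R^{\mathcal{A}'}=\{2\}$: this is a core (the loop forces $1\mapsto 1$, the predicate $R$ forces $2\mapsto 2$) whose vertex $2$ has no outgoing edge but in which every vertex has an incoming edge, so $\mathcal{A}'$ satisfies the proper pH-sentence $\forall x \exists y\, E(y,x)$ and we are in the second case. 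Now let $\mathcal{B}_2$ be $\mathcal{A}'$ plus a vertex $e$ with the single extra edge $(e,1)$. The map fixing $\mathcal{A}'$ and sending $e\mapsto 1$ retracts $\mathcal{B}_2$ onto $\mathcal{A}'$, and the tuple $(2,e,e)$ is isolated in $\mathcal{B}_2^{r(\sigma)}$, so by your own working characterisation $\mathcal{B}_2$ is $\qcsplogic$-equivalent to $\mathcal{A}'+\mathcal{K}_1$, has the same cardinality $|A'|+1$, yet $e$ is not isolated; taking $\mathcal{A}:=\mathcal{B}_2+\mathcal{K}_1$, both candidates even occur as substructures of $\mathcal{A}$ of minimal size. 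What rescues uniqueness -- and what the paper actually uses -- is that minimality is taken with respect to domain size \emph{and} number of tuples. Any equivalent structure contains a copy of $\mathcal{A}'$, with all of its tuples, as an induced substructure; adding tuples to $\mathcal{A}'$ on the same domain cannot falsify the proper pH-sentence it satisfies (positive sentences are preserved under adding tuples -- the paper's ``monotonicity of QCSP''), so the domain must grow; and among the size-$(|A'|+1)$ candidates the tuple-minimal one has the extra element occurring in no tuple at all, i.e.\ is exactly $\mathcal{A}'$ plus an isolated element. With that reading of minimality your ``delicate part'' becomes immediate, and no saturation analysis of the coordinates of $\sigma$ is needed.
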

\begin{proof}
  By assumption $\mathcal{A}$ does not satisfy any proper pH-sentence. Consequently a minimal structure $\mathcal{A}'$ (both w.r.t. domain
  size and number of tuples) which satisfies the same pH-sentences
  as $\mathcal{A}$ will satisfy the same pp-sentences as
  $\mathcal{A}$ and none of the proper pH-sentences
  either. It follows that $\mathcal{A}'$ must contain the core of
  $\mathcal{A}$ (and can be no smaller). If this core satisfies no
  proper pH-sentence then we are done and by the previous
  theorem $\mathcal{A}^{r(\sigma)}$ has an isolated
  element. Otherwise, we must look for a structure that contains the
  core of $\mathcal{A}$ and does not satisfy any proper universal
  sentence. Adding tuples to $\mathcal{A}$ can clearly not force this
  property by monotonicity of QCSP. Thus, the minimal (and unique) such structure
  will be obtained by the addition of an isolated element.
  Note that in this second case we have also a substructure of $\mathcal{A}$.
\end{proof}
\begin{remark}
  It follows that checking whether a structure with an isolated
  element is a Q-core is of the same complexity as checking whether it
  is a core. Recall that the latter is known to be a co-NP-complete
  decision problem (the induced sub-structure that ought to be a core
  is given via an additional monadic predicate $M$), see \cite{HellNesetril}. We show that the
  former is also co-NP-complete (we also assume a monadic predicate as in this
  particular case the Q-core is an induced substructure), by showing
  inter-reducibility of both problem.

  (hardness)
  Let $\langle\mathcal{A},M\rangle$ be the input to the core problem
  (we assume that it is not trivial and that $M$ does not contain any
  isolated element).
  If $\mathcal{A}^{r(\sigma)}$ has an isolated element (it can be
  done in polynomial time as $r(\sigma)$ does not depend on
  $\mathcal{A})$) then we reduce to $\langle\mathcal{A},M
  \rangle$,
  and to $\langle\tilde{\mathcal{A}},\tilde M \rangle$ otherwise, where  
  $\tilde{\mathcal{A}}$ consists of the disjoint union of $\mathcal{A}$ with an isolated
  element and $\tilde M$ is the union of $M$ with this new element.

  (co-NP-complete algorithm)
  Let $\langle\tilde{\mathcal{A}},\tilde M\rangle$ be the input to the
  Q-core problem such that $\tilde{\mathcal{A}}$ has an isolated
  element. 
  We check whether the alleged Q-core (the substructure of
  $\tilde{\mathcal{A}}$ induced by $\tilde{M}$) elevated to the $r$th 
  power has an isolated element. If it does not we answer
  no. If the alleged Q-core has more than one isolated element we
  answer also no.
  Otherwise, we remove at most one isolated element from
  $\tilde M$ to derive $M\subseteq \tilde{M}$ and reduce to the core
  question w.r.t. $\langle\tilde{\mathcal{A}},M\rangle$. 
\end{remark}

\section{The usefulness of Q-cores}

Graphs are relational structures with a single symmetric relation
$E$. We term them \emph{partially reflexive} (p.r.) to emphasise that
any vertex may or may not have a self-loop.
A \emph{p.r. tree} may contain self-loops but no larger cycle $C_n$ for
$n\geq 3$. A \emph{p.r. pseudotree} contains at most one cycle $C_n$ for $n\geq 3$. 
A \emph{p.r. forest} (resp., \emph{pseudoforest}) 
is the disjoint union of p.r. trees (resp., p.r. pseudotrees).

Since \mbox{p.r.} forests  (resp., pseudoforests) 
are closed under substructures, we can be assured that a Q-core of a
\mbox{p.r.} forest (resp., pseudoforest) 
is a \mbox{p.r.} forest (resp., pseudoforest). It is clear from
inspection that the Q-core of \mbox{p.r.} forest (resp., pseudoforest, \mbox{p.r.} cycle) 
is unique up to isomorphism, but we do not prove this as it does not
shed any light on the general situation. The doubting reader may
substitute ``a/ all'' for ``the'' in future references to Q-cores in
this section.

The complexity classifications of~\cite{QCSPforests} were largely
derived using the properties of equivalence w.r.t. $\qcsplogic$.
This will be the central justification for the following propositions.

Let $\bK^*_i$ and $\bK_i$ be the reflexive and irreflexive
$i$-cliques, respectively. Let $[n]:=\{1,\ldots,n\}$. For $i \in [n]$
and $\alpha \in \{0,1\}^n$, let $\alpha[i]$ be the $i$th entry of
$\alpha$. For $\alpha \in \{0,1\}^*$, 
let $\mathcal{P}_{\alpha}$ be
the path with domain $[n]$ and edge set $\{ (i,j) : |j-i|=1 \} \cup \{
(i,i) : \alpha[i]=1 \}$ 
%
For a tree $\mathcal{T}$ and vertex $v \in T$, let $\lambda_T(v)$ be the shortest distance in $\mathcal{T}$ from $v$ to a looped vertex (if $\mathcal{T}$ is irreflexive, then $\lambda_T(v)$ is always infinite). Let $\lambda_T$ be the maximum of $\{\lambda_T(v):v \in T\}$. A tree is \emph{loop-connected} if the self-loops induce a connected subtree. A tree $\mathcal{T}$ is \emph{quasi-loop-connected} if either 1.) it is irreflexive, or 2.) there exists a connected reflexive subtree $\mathcal{T}_0$ (chosen to be \textbf{maximal}) such that there is a walk of length $\lambda_T$ from every vertex of $\mathcal{T}$ to $T_0$. 

\subsection{Partially reflexive forests}

It is not true that, if $\bH$ is a \mbox{p.r.} forest, then either $\bH$ admits a majority polymorphism, and QCSP$(\bH)$ is in NL, or QCSP$(\bH)$ is NP-hard. However, the notion of Q-core restores a clean delineation.
\begin{proposition}
Let $\bH$ be a \mbox{p.r.} forest. Then either the Q-core of $\bH$ admits a majority polymorphism, and QCSP$(\bH)$ is in NL, or QCSP$(\bH)$ is NP-hard.
\end{proposition}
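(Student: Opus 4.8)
The plan is to reduce the statement to the Q-core and then lean on the complexity classification of~\cite{QCSPforests}. Write $\bH'$ for a Q-core of $\bH$. By definition $\bH'$ is a substructure of $\bH$ that satisfies exactly the same sentences of $\qcsplogic$, so QCSP$(\bH)$ and QCSP$(\bH')$ are literally the same decision problem (identical sets of yes-instances among $\qcsplogic$-sentences) and in particular have the same complexity. Since p.r. forests are closed under substructures, $\bH'$ is again a p.r. forest, and moreover it is its own Q-core. Hence it suffices to establish the dichotomy for $\bH'$: either $\bH'$ admits a majority polymorphism, whence QCSP$(\bH')$ is in $\NL$, or QCSP$(\bH')$ is $\NP$-hard.

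First I would dispatch the membership direction. If $\bH'$ carries a majority polymorphism -- that is, a ternary near-unanimity operation -- then by the standard theory of near-unanimity polymorphisms for quantified constraints, QCSP$(\bH')$ lies in $\NL$; transporting this back along the $\qcsplogic$-equivalence gives QCSP$(\bH)$ in $\NL$. This is the easy half and relies only on the general algebraic upper bound together with the reduction of the first paragraph.

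The substance is the converse: assuming $\bH'$ admits no majority polymorphism, derive $\NP$-hardness. Here I would invoke a structural characterisation of when a p.r. forest admits a majority polymorphism, phrased through the loop-connectedness notions introduced just before the proposition: a p.r. tree carries a majority polymorphism precisely when its looped vertices induce a connected subtree (loop-connectedness), and the effect of passing to the Q-core is exactly to collapse a quasi-loop-connected tree to a loop-connected one while discarding components that are redundant for $\qcsplogic$. Consequently a p.r. forest that equals its own Q-core and fails to admit a majority polymorphism must contain a component that is not (quasi-)loop-connected, and for such forests the hardness constructions of~\cite{QCSPforests} yield $\NP$-hardness of QCSP.

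The main obstacle I anticipate is precisely this last matching step: showing that for a Q-core p.r. forest the algebraic condition ``admits a majority polymorphism'' coincides with the structural $\NL$-condition of~\cite{QCSPforests}, so that no intermediate complexity can occur. For a general p.r. forest the two genuinely diverge -- this is exactly the phenomenon flagged before the proposition -- because redundant, QCSP-dominated components can destroy a majority polymorphism without affecting the complexity; the Q-core strips these away. The crux is therefore to verify (i) that the Q-core of any p.r. forest whose QCSP is in $\NL$ is genuinely loop-connected and hence carries a majority polymorphism, and (ii) that every remaining Q-core forest lacking a majority polymorphism sits on the $\NP$-hard side of the classification. Carrying out (i) and (ii) cleanly, using the quasi-loop-connectedness machinery, is where the real work lies.
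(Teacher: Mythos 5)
Your top-level skeleton agrees with the paper's for the connected case: pass to a Q-core $\bH'$ (same QCSP, again a p.r.\ forest, its own Q-core), then invoke \cite{QCSPforests} -- loop-connected trees admit majority, quasi-loop-connected trees are QCSP-equivalent to a loop-connected subtree (which is the Q-core), and all remaining trees are NP-hard without majority. But there is a genuine gap, and it is precisely where the paper spends most of its effort: the disconnected case. Your hardness route -- ``a Q-core forest lacking majority must contain a component that is not (quasi-)loop-connected, and for such forests the hardness constructions of \cite{QCSPforests} yield NP-hardness'' -- is false as stated for forests with two or more trees. By the classification, QCSP of a p.r.\ forest with more than one tree is \emph{never} NP-hard: the paper shows its Q-core is always a disjoint union of exactly two of $\bK_1$, $\bK_1^\star$, $\bP_{10^\lambda}$, $\bK_2$, each of which admits majority, so the problem lands in $\NL$ even if the original forest contains a wildly non-quasi-loop-connected tree among its components. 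Hardness arises only for connected templates, so the statement you actually need is that a Q-core forest lacking majority is a \emph{single} tree that is not quasi-loop-connected -- and establishing that requires computing the Q-cores of disconnected forests, which is not the ``discarding redundant components'' you describe. For instance, the Q-core of a disconnected, properly partially reflexive forest with no irreflexive component is $\bK_1^\star + \bP_{10^\lambda}$, where $\lambda$ is the longest walk-distance from a vertex to a self-loop: a structure built from global data of the forest, not a sub-union of its trees. The paper proves this equivalence via surjective homomorphisms from powers together with the observation that the Q-core must falsify the pH-sentence $\forall x \exists y_1,\ldots,y_{\lambda-1}\ E(x,y_1)\wedge E(y_1,y_2)\wedge\ldots\wedge E(y_{\lambda-2},y_{\lambda-1})$, which pins down the parameter $\lambda$.

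Two smaller points. First, your biconditional ``a p.r.\ tree carries a majority polymorphism precisely when loop-connected'' is asserted without support; the paper needs and uses only the two one-way implications from \cite{QCSPforests} (loop-connected $\Rightarrow$ majority; neither loop- nor quasi-loop-connected $\Rightarrow$ NP-hard and no majority), plus the fact that a quasi-loop-connected tree's Q-core is loop-connected, so you should avoid leaning on the unproved converse. Second, your proposal explicitly defers its steps (i) and (ii) as ``where the real work lies'' -- but those steps, together with the case analysis over edgeless forests, isolated vertices, irreflexive forests (Q-core $\bK_2$ or $\bK_2+\bK_1$), and the four disconnected Q-core shapes above, \emph{are} the proof; as written the proposal is a correct-in-outline plan for the connected half and an incorrect plan for the disconnected half, not a complete argument.
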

\begin{proof}
  We assume that graphs have at least one edge (otherwise the Q-core is
  $\bK_1$).
  Irreflexive forests are a special case of bipartite graphs, which are
  all equivalent w.r.t. $\qcsplogic$, their Q-core being $\bK_2$ when
  they have no isolated vertex (see
  example~\ref{ex:bip}) and $\bK_2 + \bK_1$ otherwise.

  We assume from now on that graphs have at least one edge and one
  self-loop.
  The one vertex case is $\bK_1^*$. We assume larger graphs from now
  on. If the graph contains an isolated element then its Q-core is $\bK_1 + \bK_1^\star$.
  Assume from now on that the graph does not have an isolated element.

  We deal with the disconnected case first.
  If the graph is reflexive, then its Q-core is $\bK_1^\star +\bK_1^\star$.
  Otherwise, the graph is properly partially reflexive in the sense
  that it embeds both $\bK_1^\star$ and $\bK_1$. If the graph has an
  irreflexive component then its Q-core is $\bK_2 +\bK_1^\star$.
  If the graph has no irreflexive component, then its Q-core is $\bK_1^\star + \bP_{10^\lambda}$ where $\lambda$ is the longest walk from any vertex to a self-loop. To see this last case gives an equivalent QCSP, we may consider power surjective homomorphisms, together with the fact that the Q-core must not satisfy $\forall x \exists y_1,\ldots,y_{\lambda-1} \ E(x,y_1) \wedge E(y_1,y_2) \wedge \ldots \wedge E(y_{\lambda-2},y_{\lambda-1})$.
 
We now follow the classification of \cite{QCSPforests}. If a \mbox{p.r.}
forest contains more than one \mbox{p.r.} tree, then the Q-core is
among those formed from the disjoint union of exactly two (including
the possibility of duplication) of $\bK_1$, $\bK^*_1$, $\bP_{10^\lambda}$,
$\bK_2$. Each of these singularly admits a majority polymorphism, therefore so does any of their disjoint unions. 

We now move on to the connected case, \mbox{i.e.} it remains to consider \mbox{p.r.} trees $\bT$. If $\bT$ is
irreflexive, then its Q-core is $\bK_2$ or $\bK_1$, which admit
majority polymorphisms. If $\bT$ is loop-connected, then it admits a
majority polymorphism \cite{QCSPforests}. If $\bT$ is
quasi-loop-connected, then it is QCSP-equivalent to one of its
subtrees that is loop-connected \cite{QCSPforests} which will be its
Q-core. In all other cases QCSP$(\bT)$ is NP-hard, and $\bT$ does not
admit majority \cite{QCSPforests}. 
\end{proof}

\subsection{Irreflexive Pseudoforests}

A \emph{pseudotree} is a graph that involves at most one cycle. A \emph{pseudoforest} is the disjoint union of a collection of pseudotrees.
\begin{proposition}
Let $\bH$ be an irreflexive pseudoforest. Then either the Q-core of $\bH$ admits a majority polymorphism, and QCSP$(\bH)$ is in NL, or QCSP$(\bH)$ is NP-hard.
\end{proposition}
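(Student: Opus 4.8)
The plan is to mirror the structure of the proof for partially reflexive forests, now specialised to irreflexive pseudoforests. Since an irreflexive pseudoforest is irreflexive, every component is bipartite except possibly for components containing an odd cycle. The first step is to reduce to the Q-core: because pseudoforests are closed under substructures, the Q-core of $\bH$ is again an irreflexive pseudoforest, so it suffices to analyse the possible Q-cores and show that each either admits a majority polymorphism (whence QCSP is in \NL) or yields NP-hardness. I would begin by isolating the finitely many ``atomic'' structures that can arise as components of a Q-core, analogously to the list $\bK_1, \bK_1^\star, \bP_{10^\lambda}, \bK_2$ in the forest case.

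\textbf{Classifying the components.} Each pseudotree component is either a tree (hence a bipartite graph, whose Q-core contribution collapses to $\bK_2$ or $\bK_1$ exactly as in Example~\ref{ex:bip}) or a pseudotree genuinely containing one cycle $C_n$. For the latter I would split on the parity of $n$: if $n$ is even the component is still bipartite, and its QCSP-behaviour is governed by the bipartite classification, so its Q-core contribution again reduces to $\bK_2$ (plus possibly $\bK_1$ for isolated vertices). The genuinely new case is an \emph{odd} cycle $C_n$ with pendant trees attached. Here I would use the fact, via the power-surjective-homomorphism characterisation of Theorem~\ref{theo:containment:qcsplogic}, that any tree hanging off the cycle can be absorbed: a sufficiently high power of the cycle surjects onto the whole pseudotree, so the Q-core of a connected pseudotree with an odd cycle is the odd cycle $C_n$ itself (possibly together with an isolated vertex to fix the canonical universal sentence, exactly the mechanism of the unary and isolated-element results above).

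\textbf{The majority/NP-hardness dichotomy.} Once the Q-core is pinned down to a disjoint union of structures drawn from $\{\bK_1, \bK_2, C_n \ (n \text{ odd})\}$, I would invoke the standard facts about majority polymorphisms. The graphs $\bK_1$ and $\bK_2$ admit majority, and I would check which odd cycles admit a majority polymorphism: $\bK_1=C_1$ (a loop, but here we are irreflexive) and the triangle-type structures behave well, whereas longer odd cycles $C_n$ with $n\geq 5$ do not admit majority and give NP-hardness (indeed, for irreflexive $C_n$ with $n\geq 5$, CSP$(C_n)$ itself is already NP-complete by the Hell--Ne\v{s}et\v{r}il dichotomy, and QCSP is at least as hard). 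The key closure fact is that a disjoint union admits majority iff each component does and the components are suitably compatible; for the relevant small structures this is direct. So the Q-core admits majority precisely when every odd-cycle component is a triangle (or the relevant short cycle), and in that case QCSP$(\bH)$ lies in \NL\ by the standard majority-implies-\NL\ result; otherwise some component is an NP-hard irreflexive cycle and QCSP$(\bH)$ is NP-hard.

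\textbf{Main obstacle.} The delicate point I expect to be hardest is the absorption argument for pendant trees on an odd cycle: showing rigorously that a power of the odd cycle surjects onto the full pseudotree (so that the pendant structure does not survive into the Q-core), and conversely that the odd cycle embeds back, so the two are QCSP-equivalent. This requires computing the relevant powers and exhibiting the surjective homomorphisms explicitly, controlling the interaction between the cyclic part and the attached trees, and verifying the canonical-universal-sentence condition that dictates whether an extra isolated vertex must be appended. The parity split and the precise list of cycles admitting majority must be handled carefully, but once the Q-core is correctly identified the final dichotomy follows from the cited classification and the standard polymorphism theory.
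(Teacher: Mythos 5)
There is a genuine error at the heart of your case analysis. You place ``triangle-type structures'' on the tractable side, asserting that the Q-core admits majority ``precisely when every odd-cycle component is a triangle (or the relevant short cycle)'' and that only $C_n$ with $n \geq 5$ gives NP-hardness. This is false: the irreflexive triangle $C_3 = \bK_3$ is exactly the template of $3$-colouring, so CSP$(\bK_3)$ is NP-complete, and by the very Hell--Ne\v{s}et\v{r}il dichotomy you cite, CSP$(\bH)$ is NP-hard for \emph{every} irreflexive non-bipartite $\bH$ -- there is no ``short cycle'' exception. Correspondingly, no irreflexive odd cycle (triangle included) admits a majority polymorphism; a core non-bipartite graph admits no such polymorphism (this is the algebraic content of the Hell--Ne\v{s}et\v{r}il theorem). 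The paper's proof is the correct delineation: if $\bH$ is bipartite, its Q-core is $\bK_1$, $\bK_2$ or $\bK_2 + \bK_1$ and admits majority; otherwise the Q-core contains an odd cycle, does not admit majority, and QCSP$(\bH)$ is NP-hard, following the classification of \cite{CiE2006}. Your proposal, as stated, would wrongly classify any pseudoforest containing a triangle as tractable.

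A secondary point: your ``absorption'' programme -- showing that a power of the odd cycle surjects onto the whole pseudotree so that pendant trees vanish from the Q-core -- is both unnecessary and unsubstantiated. The paper never pins down the Q-core in the non-bipartite case; it only needs that the Q-core \emph{contains} an odd cycle, which is immediate since the Q-core contains the core, and the core of a non-bipartite graph is non-bipartite (homomorphic equivalence preserves odd girth). NP-hardness then comes for free from the CSP, with no surjective-homomorphism computation on powers of cycles required. So even after repairing the triangle error, your route does substantially more work than the statement demands, and the hardest step you flag is one the paper simply sidesteps.
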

\begin{proof}
We follow the classification of \cite{CiE2006}. If $\bH$ is bipartite, then its Q-core is either $\bK_2$, $\bK_1$, $\bK_2 + \bK_1$ (see \cite{LICS2008}) and this admits a majority polymorphism. Otherwise its Q-core contains an odd cycle, which does not admit a majority polymorphism, and QCSP$(\bH)$ is NP-hard.
\end{proof}

\section{Computing a Q-core}
We may use Theorem~\ref{theo:containment:mylogic} to provides a first 
algorithm (Algorithm~\ref{algo:naiveQcore}). This does not appear
very promising if we wish to use Q-cores as a preprocessing step.
We will propose and illustrate a general and less naive method to compute Q-cores by
computing $U$-$X$-core and cores first.  

\begin{algorithm}
\SetKwInOut{Input}{input}
\SetKwInOut{Output}{output}
\SetKwInOut{Init}{initialisation}
\SetKw{Guess}{guess}
\SetKw{Set}{set}
\SetKw{KwCheck}{check}
\SetKw{KwInit}{initialisation}

\Input{A structure $\mathcal{A}$}
\Output{The list $L$ of Q-cores of $\mathcal{A}$}
\Init{\Set $L:=\{\mathcal{A}\}$}
\ForAll{substructure $\mathcal{B}$ of $\mathcal{A}$}
{
  \If{there exists a surjective homomorphism from
    $\mathcal{A}^{|A|^{|B|}}$ to $\mathcal{B}$}
  {\If{there exists a surjective homomorphism from $\mathcal{B}^{|B|^{|A|}}$ to
      $\mathcal{A}$} 
    {Remove any structure containing $\mathcal{B}$ in $L$\; 
      Add $\mathcal{B}$ to $L$\;}
  }
}
\Output{List of Q-cores $L$}
\caption{A naive approach to compute the Q-cores.}
\label{algo:naiveQcore}
\end{algorithm}

Another nice feature of cores and $U$-$X$-cores which implies their
uniqueness is the following: any substructure $\mathcal{C}$ of
$\mathcal{A}$ that agrees with it on $\csplogic$ (respectively on
$\mylogic$) will contain the core  (respectively the $U$-$X$-core). 
Consequently, the core and the $U$-$X$-core may be computed in a
\emph{greedy fashion}.
Assuming that the Q-core would not satisfy this nice property,
why should this concern the Q-core?
Well, we know that \textbf{any} Q-core will lie somewhere between the $U$-$X$-core and
the core that are induced substructures: this is a direct consequence of the inclusion of the
corresponding fragments of first-order logic and their uniqueness.
Moreover, according to our current knowledge, checking for equivalence appears, at least on paper, much easier for
$\mylogic$ than $\qcsplogic$: compare the number of functions from $A$
to the power set of $B$ (${2^{|B|}}^{|A|}=2^{|B|\times|A|}$) with the number of functions from
$A^r$ to $B$ ($|B|^{|A|^r}$) where $r$ could be as large as
$|A|^{|B|}$ and can certainly be greater than $r\approx|A|$ (see Proposition~\ref{prop:lowerbound:powerepimorphism}).
So it make sense to bound the naive search for Q-cores.

Furthermore, we know that the $U$-$X$-core can be identified by specific
surjective hypermorphisms that act as the identity on
$X$ and contain the identity on $U$~\cite{LICS2011} which makes the search for the $U$-$X$-core
somewhat easier than its definition suggest (see Algorithm~\ref{algo:UXcore}).

\begin{algorithm}
\SetKwInOut{Input}{input}
\SetKwInOut{Output}{output}
\SetKwInOut{Init}{initialisation}
\SetKwInOut{Variable}{variable}
\SetKwInOut{Variables}{variables}

\SetKw{Guess}{guess}
\SetKw{Set}{set}
\SetKw{KwCheck}{check}
\SetKw{KwLet}{let}

\Input{a structure $\mathcal{A}$.}
\Output{the $U$-$X$-core of $\mathcal{A}$.}
\Variables{$U$ and $X$ two subsets of $A$.}
\Variable{$h$ a surj. hypermorphism from $\mathcal{A}$ to
  $\mathcal{A}$ s.t. $h(U)=A$ and $h^{-1}(X)=A$.}
\Variable{$\mathcal{B}$ an induced substructure of $\mathcal{A}$ such
  that $B=U\cup X$.}

\Init{\Set $U:=A$, $X:=A$, $\mathcal{B}:=\mathcal{A}$, $h$ the identity}

\Repeat{$U$ and $X$ are minimal}{
  \Guess a subset $U'$ of $U$ and a subset $X'$ of $X$\;
  \KwLet $h'$ be a map from $B$ to $B$\;
  \lForAll{$x'$ in $X'$}{\Set $h'(x'):=\{x'\}$}\;
  \lForAll{$u'$ in $U'\setminus X'$}{
    \Guess $x'$ in $X'$
    \Set $h'(u'):=\{u',x'\}$
  }\;
  \ForAll{$z'$ in $B\setminus (U'\cup X')$}
  {
    \Guess $x'$ in $X'$ \Set $h(z'):=\{x'\}$\;
    \Guess $u'$ in $U'$ \Set $h(u'):=h(u')\cup \{z'\}$\;
  }
  \If{$h'$ is a surj. hypermorphism from $B$ to $B$}
  {
    \Set $\mathcal{B}$ to be the substructure of $\mathcal{B}$ induced
    by $U'\cup X'$\; 
    \Set $U:= U'$, $X:= X'$ and $h:= h'\circ h$\;
  }
}
\Output{$\mathcal{B}$.}
\caption{A greedy approach to compute the $U$-$X$-core.}
\label{algo:UXcore}
\end{algorithm}

Observe also that $X$ must contain the core $\mathcal{C}$ of the
$U$-$X$-core $\mathcal{B}$, which is also the core of the original
structure $\mathcal{A}$ (this is because $h$ induces a
so-called retraction of $\mathcal{A}$ to the substructure $\mathcal{A}_{|X}$ induced by
$X$). 
Thus we may compute the core greedily from $X$.
Next, we do a little bit better than using our naive algorithm, by
interleaving steps where we find a substructure that is
$\qcsplogic$-equivalent, with steps where we compute its $U$-$X$-core 
(one can find a sequence of distinct substructures $\mathcal{B}$, $\mathcal{D}$ and
$\mathcal{B}'$ such that $\mathcal{B}$ is a $U$-$X$-core, which is
$\qcsplogic$-equivalent to $\mathcal{D}$, whose $U$-$X$-core
$\mathcal{B}'$ is strictly smaller than $\mathcal{B}$, see Example~\ref{example:interleaving:UXcore:Qretract}).
Algorithm~\ref{algo:BoundedSearchQcore} describes this proposed method
informally when we want to compute \textbf{one} Q-core (of course, we
would have no guarantee that we get the smallest Q-core, unless the
Q-core can be also greedily computed, which holds for all cases we
have studied so far). 

In Algorithm~\ref{algo:BoundedSearchQcore}, we have purposely  not detailed
line~\ref{algo:line:Q-retractionCheck}. We could use the
characterisation of $\qcsplogic$-containment via surjective
homomorphism from a power of Theorem~\ref{theo:containment:qcsplogic} as in Algorithm~\ref{algo:naiveQcore}.
Alternatively, we can use a refined form of (iii) in this Theorem
and use the canonical sentences in $\Pi_2$-form
$\psi_{\mathcal{B},m_1}$ and $\psi_{\mathcal{D},m_2}$, with
$m_1:=\min(|D|,|U|)$ and
$m_2:=|U|$ (see Definition~\ref{def:canonical:pi2:sentence}). The test would consists in checking that $\mathcal{B}$
satisfies $\psi_{\mathcal{D},m_2}$ (where we may relativise to
universal variables to $\mathcal{U}$ and existential variables to $X$) and $\mathcal{D}$ satisfies
$\psi_{\mathcal{B},m_1}$.
This is correct because we know
that we may relativise every universal variable to $U$ within
$\mathcal{B}$. Thus, it suffices to consider $\Pi_2$-sentences with at
most $|U|$ universal variables.

\begin{algorithm}
\SetKwInOut{Input}{input}
\SetKwInOut{Output}{output}
\SetKwInOut{Variable}{variable}
\SetKwInOut{Variables}{variables}
\SetKwInOut{Init}{initialisation}
\SetKw{Guess}{guess}
\SetKw{Pick}{pick}
\SetKw{Set}{set}
\SetKw{KwCheck}{check}

\Input{a structure $\mathcal{A}$}
\Output{a Q-core $\mathcal{B}$ of $\mathcal{A}$}

\Init{
  compute the $U$-$X$-core of $\mathcal{A}$ as in
  Algorithm~\ref{algo:UXcore}\;
  \Set $\mathcal{B}$ to be the $U$-$X$-core\;
  \Set $\mathcal{C}$ to be the core $\mathcal{C}$ of the substructure of
  $\mathcal{B}$ induced by $X$\;} 
\Repeat{$\mathcal{B}$ is minimal}{
  \Guess $\mathcal{D}$ a substructure of $\mathcal{B}$ that contains $\mathcal{C}$\;
  \KwCheck that $\mathcal{B}$ and $\mathcal{C}$ are equivalent
  w.r.t. $\qcsplogic$\;\label{algo:line:Q-retractionCheck}
  \Set $\mathcal{B}$ to be the $U$-$X$-core of $\mathcal{D}$\;
}

\Output{$\mathcal{B}$.}
\caption{Bounded Search for a Q-core.}
\label{algo:BoundedSearchQcore}
\end{algorithm}

\begin{example}\label{example:interleaving:UXcore:Qretract}
  We describe a run of Algorithm~\ref{algo:BoundedSearchQcore} on input
  $\mathcal{A}:=\mathcal{A}_4$.
  During the initialisation, we compute its $U$-$X$-core
  $\mathcal{B}:=\mathcal{A}_3$ and discover that $U=\{2,3\}$ and
  $X=\{1,2\}$. We compute  $\mathcal{C}:=\mathcal{A}_1$, the core of
  the substructure induced by $X$. 

  Note that $\mathcal{B}=\mathcal{A}_3$ is isomorphic to $\mathcal{P}_{110}$.
  Next the algorithm guesses a substructure $\mathcal{D}$ of
  $\mathcal{B}$ that contains $\mathcal{C}$: \textsl{e.g.} it drops the
  self-loop around vertex $3$ to obtain a structure isomorphic to
  $\mathcal{P}_{010}$ and checks successfully equivalence w.r.t. $\qcsplogic$
  (there is a surjective homomorphism from $\mathcal{P}_{010}$ to $\mathcal{P}_{110}$; and,
  conversely we can use the surjective homomorphism from $\mathcal{P}_{110}\times
  \mathcal{P}_{110}$ to $\mathcal{P}_{10}\times
  \mathcal{P}_{11}$ composed with that from the former to $\mathcal{P}_{010}$).

  Next the algorithm computes the $U'$-$X'$-core $\mathcal{B}'$ of $\mathcal{D}$ which is $\mathcal{A}_2$
  (witnessed by $h'(1)=1, h'(2)=h'(3)=\{1,2,3\}$, $U':=\{2\}$,
  $X':=\{1\}$) and sets $\mathcal{B}:=\mathcal{B}'=\mathcal{A}_2$.
  
  The algorithm stops eventually and outputs $\mathcal{A}_2$ as it is minimal.
\end{example}


\section{Conclusion}

We have introduced a notion of Q-core and demonstrated that it does
not enjoy all of the properties of cores and $U$-$X$-core. In
particular, there need not be a unique minimal element \mbox{w.r.t.}
size in the equivalence class of structures agreeing on
pH-sentences. However, we suspect that the notion of Q-core we give
is robust, in that the Q-core of any structure $\bB$ is unique up to
isomorphism; and, that it sits inside any substructure of
$\mathcal{B}$ that satisfies the same sentence of $\qcsplogic$, making
it computable in a greedy fashion. Thus, the nice behaviour of Q-cores is almost restored,
but ``induced substructure'' in the properties of core or $U$-$X$-core
must be replaced by the weaker ``substructure''. 

Generalising the results about Q-cores of structures with an isolated
element to disconnected structures is already difficult. Just as the
pH-theory of structures with an isolated element is essentially
determined by their pp-theory, so the pH-theory of disconnected
structures is essentially determined by its $\forall \exists^*$
fragment (see \cite{CiE2006}). 



\bibliographystyle{acm}

\begin{thebibliography}{10}

\bibitem{barto:1782}
{\sc Barto, L., Kozik, M., and Niven, T.}
\newblock The {CSP} dichotomy holds for digraphs with no sources and no sinks
  (a positive answer to a conjecture of {Bang-Jensen} and {Hell}).
\newblock {\em SIAM Journal on Computing 38}, 5 (2009), 1782--1802.

\bibitem{BBCJK}
{\sc B{\"o}rner, F., Bulatov, A.~A., Chen, H., Jeavons, P., and Krokhin, A.~A.}
\newblock The complexity of constraint satisfaction games and {QCSP}.
\newblock {\em Inf. Comput. 207}, 9 (2009), 923--944.

\bibitem{Bulatov}
{\sc Bulatov, A.}
\newblock A dichotomy theorem for constraint satisfaction problems on a
  3-element set.
\newblock {\em J. ACM 53}, 1 (2006), 66--120.

\bibitem{JBK}
{\sc Bulatov, A., Krokhin, A., and Jeavons, P.~G.}
\newblock Classifying the complexity of constraints using finite algebras.
\newblock {\em SIAM Journal on Computing 34\/} (2005), 720--742.

\bibitem{ChandraMerlin}
{\sc Chandra, A.~K., and Merlin, P.~M.}
\newblock Optimal implementation of conjunctive queries in relational data
  bases.
\newblock In {\em Proceddings of {STOC}'77\/} (1977), pp.~77--90.

\bibitem{LICS2008}
{\sc Chen, H., Madelaine, F., and Martin, B.}
\newblock Quantified constraints and containment problems.
\newblock In {\em 23rd Annual IEEE Symposium on Logic in Computer Science\/}
  (2008), pp.~317--328.

\bibitem{CohenJeavonsSurvey}
{\sc Cohen, D., and Jeavons, P.}
\newblock The complexity of constraint languages.
\newblock {\em Appears in: Handbook of Constraint Programming\/} (2006).

\bibitem{DBLP:journals/tods/FaginKP05}
{\sc Fagin, R., Kolaitis, P.~G., and Popa, L.}
\newblock Data exchange: getting to the core.
\newblock {\em ACM Trans. Database Syst. 30}, 1 (2005), 174--210.

\bibitem{FederVardi}
{\sc Feder, T., and Vardi, M.}
\newblock The computational structure of monotone monadic {SNP} and constraint
  satisfaction: {A} study through {D}atalog and group theory.
\newblock {\em {SIAM} Journal on Computing 28\/} (1999), 57--104.

\bibitem{cores}
{\sc Hell, P., and Nesetril, J.}
\newblock The core of a graph.
\newblock {\em Discrete Math. 109\/} (1992), 117--126.

\bibitem{HellNesetril}
{\sc Hell, P., and Ne\v{s}et\v{r}il, J.}
\newblock On the complexity of {H}-coloring.
\newblock {\em Journal of Combinatorial Theory, Series B 48\/} (1990), 92--110.

\bibitem{HNBook}
{\sc Hell, P., and Ne\v{s}et\v{r}il, J.}
\newblock {\em Graphs and Homomorphisms}.
\newblock Oxford University Press, 2004.

\bibitem{LICS2011}
{\sc Madelaine, F.~R., and Martin, B.}
\newblock A tetrachotomy for positive first-order logic without equality.
\newblock In {\em LICS\/} (2011), pp.~311--320.

\bibitem{QCSPforests}
{\sc Martin, B.}
\newblock {QCSP} on partially reflexive forests.
\newblock In {\em Principles and Practice of Constraint Programming - 17th
  International Conference, CP 2011\/} (2011).

\bibitem{CiE2006}
{\sc Martin, B., and Madelaine, F.}
\newblock Towards a trichotomy for quantified {H}-coloring.
\newblock In {\em 2nd Conf. on Computatibility in Europe, LNCS 3988\/} (2006),
  pp.~342--352.

\bibitem{Schaefer}
{\sc Schaefer, T.~J.}
\newblock The complexity of satisfiability problems.
\newblock In {\em Proceedings of STOC'78\/} (1978), pp.~216--226.

\end{thebibliography}

\end{document}